\documentclass[runningheads]{llncs}
\usepackage[T1]{fontenc}
\usepackage{graphicx}
\usepackage{graphicx} 
\usepackage{color}
\usepackage{url} 
\usepackage{amsmath, amssymb}
\usepackage{tikz}
\usepackage{mathtools}
\usepackage{graphicx}
\usepackage{subcaption}
\usepackage[dvipsnames]{xcolor}
\usepackage{multicol}
\usepackage{fontawesome5}
\usepackage{cite}
\usepackage{hyperref}
\usepackage{istgame}
\usepackage{breqn}
\usetikzlibrary{shapes.geometric, arrows.meta, positioning}
\usepackage{xspace}

\usepackage{algorithm}
\usepackage{algorithmicx}
\usepackage{algpseudocode}
\usepackage{hyperref}
\usepackage{booktabs}
\usepackage{rotating}
\usepackage{multirow}
\usepackage{comment}
\usepackage{cleveref}
\usepackage{tikz}
\usetikzlibrary{shapes.geometric, arrows.meta, positioning}

\tikzset{
  block/.style = {rectangle, draw, thick, minimum width=2cm, minimum height=1cm, text centered, align=center},
  decision/.style = {diamond, draw, thick, minimum width=2cm, minimum height=1cm, text centered, aspect=1.2},
  arrow/.style = {-{Stealth[length=6pt]}, thick},
}

\usepackage[normalem]{ulem}
\newcommand{\lit}[1]{\text{#1}}   
    
\newcommand{\Part}[2]{\State \textbf{#1} #2}
\newcommand{\EndPart}{}

\newcommand{\trap}{{\sc Trap}\xspace}

\newcommand{\protocol}{{\sc Stake}\xspace}

\newcommand{\mypar}[1]{\paragraph{#1.}}

\newcommand{\ms}[1]{%
    \relax\ifmmode
        \mathord{\mathcode`\-="702D\it #1\mathcode`\-="2200}%
    \else
        {\it #1}%
    \fi
}

\newcommand{\tup}[1]{%
    \relax\ifmmode
      \langle #1 \rangle%
    \else
        $\langle$ #1 $\rangle$%
    \fi
}

\newcommand{\nradii}[2]{%
  \pgfmathsetmacro{\anglestep}{360/#1}%
  \foreach \i in {1,...,#1}{%
    \pgfmathsetmacro{\ang}{90 + (\i-1)*\anglestep}%
    \draw (0,0) -- (\ang:#2);%
  }%
}
\algnewcommand\algorithmicwaituntil{\textbf{wait until}}
\algnewcommand\WaitUntil{\State \algorithmicwaituntil{}}

\title{{\sc{Stake:}} 
Preventing a colluding majority from double spending
}

\begin{document}


\author{Zahra Naderi\inst{1} \and Vincent Gramoli\inst{2}}

\institute{
University of Sydney \\
\email{zahra.naderi@sydney.edu.au}
\and
University of Sydney and Redbelly Network \\
\email{vincent.gramoli@sydney.edu.au}
}


\maketitle

\begin{abstract}
Blockchains typically operate in open networks where it is hard to control the delay of messages. While classic blockchain protocols assumed synchrony, newer protocols try to remain secure despite unexpected network delays. Unfortunately, in the traditional model where $n$ participants can either be honest or Byzantine, we need a supermajority, or  
$\lfloor 2n/3\rfloor + 1$, of them to be honest. 
Recent work have explored the question of reducing the number of needed honest nodes in a game theory model to a majority sometimes by introducing $k$ rational players and $t$ Byzantine players. Yet, to our knowledge, no blockchain protocol managed to reduce it further. 

In this paper, we offer a blockchain protocol called \emph{Secure and Tolerant Algorithm through (K,t)-robust Equilibrium (\protocol)} that works with only $\lfloor n/3\rfloor + 1$ honest players. \protocol 
requires each consensus participant to stake a sufficiently large amount $s$ compared to their liquidity $\ell$. We show that the probability that a coalition manages to execute $\zeta$ double spending attacks (we call them a $\zeta$-uple attack) drops exponentially fast with $\zeta$. This allows us to demonstrate that \protocol reaches a ($k,t$)-robust equilibrium, or that rational players do not collude with Byzantine, even with a relatively low $s/\ell$ ratio.
\keywords{Equilibrium \and Byzantine \and Blockchain \and Rational}
\end{abstract}

\section{Introduction}

Blockchains typically operate in open networks where it is hard to control the delay of messages. While classic blockchain protocols~\cite{Nak08,wood2014ethereum,GKL24} assume synchrony~\cite{dwork1988consensus} or that all messages take less than a known delay  to reach their destination, they can be subject to double spending in case of network delays, as it sometimes happened~\cite{doublespending}.

To be secure against unexpected message delays, some blockchains~\cite{CNG21,buchman2018latest} tolerate partial synchrony~\cite{dwork1988consensus}.
The typical idea is to prevent forks by having $n$ participants reaching consensus upon a unique block before appending it to the chain~\cite{CNG21,buchman2018latest}.
As long as the set of $n$ participants is an unpredictable subset of all participants,
this method can be used in a permissionless environment.
For this solution to work though, more than two thirds of the participants should be honest~\cite{LSP82}.
Unfortunately, in a blockchain environment where participants can steal large amount of cryptocurrency by misbehaving it is unrealistic to assume that at least
$\lfloor 2n/3\rfloor + 1$  are honest.

A more realistic model is to consider the blockchain as a repeated consensus game where $k$ players are rational~\cite{Abraham2016SolidusAI,ranchal2022trap,avarikioti2025cobra}, either incentivized to behave or to collude depending on the corresponding incentives, and where $t$ players are Byzantine while the remaining $n-k-t$ players are honest.
Solving the blockchain problem in this model consists of designing an algorithm that defines a $(k,t)$-robust equilibrium~\cite{DBLP:conf/tcc/AbrahamDH08}, where $k$ rational players cannot increase their utility by colluding with $t$ Byzantine players in consecutive consensus executions.

The key idea is then to introduce incentives in the protocol, just like when proof-of-stake blockchains slash the stake of misbehaving participants. Solidus~\cite{Abraham2016SolidusAI} reaches a $(k,t)$-robust equilibrium, unfortunately it still requires more than 2/3 of the players to be honest. 
{\sc Trap}~\cite{ranchal2022trap} baits rational players to betray colluding players to only require more than half of the participants to be honest.
Unfortunately, it only solves the one-shot consensus problem and was not applied to a repeated game where players could adapt their strategy during a consensus instance based on their observations of the previous consensus instance. 

\mypar{Contributions}
In this paper, we offer the first blockchain protocol, called \emph{Secure and Tolerant Algorithm through (K,t)-robust Equilibrium (\protocol)}, that works with $\lfloor n/3\rfloor + 1$ of honest players.
Our result works in partial synchrony, which was originally believed to be impossible~\cite{avarikioti2025cobra}. By contrast with this impossibility result, our algorithm requires all consensus participants to stake a small but sufficiently large amount $s$ compared to their liquidity $\ell$. 
%
%
Note that this requirement applies well to modern regulated public blockchains 
like 
Avalanche~\cite{rocket_scalable_2020} Evergreen subnetwork, Concordium~\cite{concordium}, Mina~\cite{cryptoeprint:2020/352} and
Redbelly~\cite{CNG21}, where validators must meet non-discretionary rules to create accounts.

These requirements alone would not, however, solve this difficult problem.
Indeed, if rational players can execute an infinite amount of double spending attacks by colluding with Byzantine players, then they would always be incenvitized to do so to maximize their reward.
Interestingly, we show that the probability for a coalition to execute $\zeta$ double spending attacks, or a so-called \emph{$\zeta$-uple spending attack}, drops exponentially fast with the number $\zeta$ of carefully selected partitions. In short, the proof follows from the growing difficulty of partitioning honest players from each other as the coalition multiply the number of partitions and is confirmed empirically by the rarity of successful double spending attacks in classic blockchains, like Ethereum~\cite{wood2014ethereum}.

To demonstrate that \protocol reaches a ($k,t$)-robust equilibrium, we formalise the problem as an imperfect information game where players are Byzantine, honest or rational. We build upon recent work on the accountable consensus~\cite{DBLP:conf/icdcs/CivitGG21} and baiting strategy~\cite{ranchal2022trap} to define a protocol where rational players are incentivized to betray a coalition of attackers. 
Ensuring that \protocol solves the blockchain problem is 
non-trivial as the Folk theorem~\cite{31353} prevents us from converting a one-shot consensus game into its corresponding repeated game necessary for a (multi-shot consensus) blockchain.
The reason is that an equilibrium in an incomplete information game does not translate into an equilibrium in the corresponding repeated game, where players learn information from past game instances that impact their upcoming strategies.

The paper is organized as follows. Section~\ref{sec:model} presents the model. Section~\ref{sec:problem} introduces the problem of reaching a ($k,t$)-robust equilibrium despite $\zeta$-uple spending attempts. Section~\ref{sec:solution} introduces the \protocol protocol that solves the problem. 
Section~\ref{sec:Solution} explains why conducting a successful $\zeta$-uple spending attack is non trivial and proves that the proposed protocol solves the problem whether or not Byzantine players are able to bribe rational players.
Section~\ref{sec:analysis} illustrates how to select a stake to secure a blockchain based on the proposed protocol.
Section~\ref{sec:rw} presents the related work and Section~\ref{sec:conclusion} concludes.

\section{Model}\label{sec:model}

We consider a game among a set $N$ of $n$ players communicating in a partial synchrony distributed system, where there is an unknown bound on the delay of messages~\cite{dwork1988consensus}. The set of possible player types is $\Theta = \times_{i \in N} \Theta_i$ where $\Theta_i=\{\ms{Byzantine}, \ms{Rational}, \ms{Honest}\}$ and $\theta_i \in \Theta_i$. The \textit{honest} players always follow the protocol; the set $T\subseteq N$ of $t \leq t_0$ Byzantine players, where $t_0=\lceil \frac{n}{3} \rceil -1$, can try to initiate an attack regardless of apparent profitability and the set $K\subseteq N$ of $k$ rational players, where $t+k<n-t_0$, try to maximize their expected utility. Note that the bound on $t$ is needed to solve consensus~\cite{LSP82} in the general setting and if $t\geq n-t_0$ then Byzantine players can act as shadow authorities attempting to replace the consensus algorithm with their own~\cite{DBLP:conf/icdcs/CivitGG21}.
\emph{Nature} refers to a set of non-strategic players with a predetermined and exogenous stochastic strategy, where their decision is independent from  the strategy of other players~\cite{tadelis2013game}. 
We thus refer to \emph{Nature} as a coalition of Byzantine players that control the scheduler.

\mypar{Adversary}
To model the adversary, we define two types of schedulers that run sequentially, each managing a global asynchronous round, 
which is split into several asynchronous local rounds. In the first global round, whose duration equals a Global Stabilization Time (GST)~\cite{dwork1988consensus}, players are scheduled to move at arbitrary times by an adversary-controlled scheduler. When they are scheduled, they receive the unsent messages from the last local round, perform some internal computation on them, and then send some messages at the start of the next local round. Each player's scheduler works independently of the others, and the adversary controls it. However, the adversary cannot indefinitely obstruct the periodic scheduler; its control over the scheduler is limited to the initial global round, and when GST occurs, it is unable to extend its influence on it. After GST, players proceed in locksteps such that each message is sent in a local round and in the subsequent local round, this message is received, computation is performed and new messages are potentially sent, which is similar to how Abraham et al.~\cite{DBLP:journals/teco/AbrahamDH19} define scheduler in asynchronous environment.
To compute the utility, we need a probability distribution over outcomes that depends on the order in which players are scheduled and on message-delivery times, both of which the adversary can control prior to the occurrence of GST.
Moreover, we assume Byzantine players know all players' types. We also assume no transparency in the network, meaning other players cannot detect whether multiple accounts are controlled by the same player.

Furthermore, we consider a slowly-adaptive adversary, whereby the number of attackers and the strategies employed by Byzantine players can progressively (rather than instantaneously) change, but only between local rounds, not during a local round.

\mypar{Extensive game}
We define an \emph{extensive game} $\Gamma$ as a tuple $\langle N, H,\Theta, u \rangle$ over the set $N$ of $n$ players and where $H$ is the \emph{set of histories}. 
We assume \emph{imperfect information} in that players are uncertain about other players and their respective actions and types, and \emph{perfect recall} in that  players never forget information they once knew or actions they have previously chosen.
More formally, each component of a history is an action taken by a player, generated by the sequence of actions $\{Attack, Pass\}$, $\{Bribe, \neg Bribe\}$, $\{Betray, Collude\}$ and $\{Attack, Commit\}$. The full formal construction of $H$ and the game tree of $\Gamma$ is provided in Appendix~\ref{app:gameTree}.
This definition is classic in game theory~\cite{osborne1994course, tadelis2013game}. 

\mypar{Strategy}
A \textit{strategy} $\sigma_i$ maps player $i$'s view (its own type, past actions and the messages it has received) to an action (which is a move a player can make at a stage in a game). 
Let $u_i(\theta, \sigma)$ be the expected utility of rational player $i$ under type profile $\theta$ and each player $j \in N$ is playing the strategy $\sigma_j$. We assume $u_i$ depends on $\theta$ only through the number of players of each type.

\mypar{Cryptography}
We assume that Byzantine and rational players are polynomially bounded 
and we reuse the enhanced trapdoor permutation cryptography assumption from~\cite{DBLP:books/acm/19/0001MW19a}. Informally, an oblivious transfer is a protocol in which a sender sends multiple messages and never learns which one the receiver chooses. In practice, these two cryptographic assumptions imply that players can sign unforgeable messages and perform an oblivious transfer. We assume a PKI (Public Key Infrastructure) where each player is equipped with an unforgeable private key and a public key, with all public keys being common knowledge. These assumptions allow players to verify other players' moves without revealing their own strategies.

Finally, we assume the messages employ \textit{transferable authentication}, meaning all correct players can verify the identity of the message generator, even if they receive the message indirectly as defined in~\cite{clement2012limited}. 

\mypar{Coalitions}
After the Byzantine players make their decision to attack, they send a Byzantine coalition formation call to the rational players; as a result, rational players can distinguish which information set they are in. Then, they may bribe rational players with the profit they steal to incentivize rational players to join their coalition.
Rational players simultaneously decide whether to betray or join the coalition, without knowing the actions of other players. If they choose to betray to maximize their payoffs, even Byzantine players will not know whether rational players will betray. 

\mypar{Attack}
A \emph{blockchain} is a sequence of blocks. A blockchain protocol implements the blockchain with a distributed protocol and may lead this blockchain to fork or create \emph{branches} when multiple blocks are created at the same \emph{index} of the chain. One cause of such a fork is a double-spending attack or more generally \emph{$\zeta$-uple spending attack} with $\zeta \geq 2$, where attackers try to send the same coins $\zeta$ times. Note that the transferable authentication property makes an attacker \textit{detectable} when the attacker equivocates while $\zeta$-uple spending with $\zeta \geq 2$.

\mypar{Stake and liquidity}
We assume that all players split their assets into two parts: \emph{stake} $s$ and \emph{liquidity} $\ell$. By staking some of their assets, they are allowed to participate in the consensus algorithm as a player 
and take a decision or vote. 
Players can transfer their liquidity by issuing transactions
but we omit the transaction fees for the sake of simplicity.
If a player manages to successfully execute a $\zeta$-uple spending attack where $\zeta \geq 2$, then he could effectively multiply its liquidity by $\zeta$.
We consider that the quantity of staking and liquidity is constant ($r=\frac{s}{\ell}$).

Note that modern blockchains 
requiring some users to meet non-discretionary rules~\cite{concordium,cryptoeprint:2020/352,CNG21} can automate this check via smart contracts before letting users become players.

\section{The Problem of Coping with $\zeta$-uple Spending Attacks}\label{sec:problem}

Our goal is to design a  blockchain protocol 
with a minimal stake requirement to disincentivize players from double-spending or more generally $\zeta$-uple spending with $\zeta \geq 2$. Specifically, the goal is to
require $n$ players to stake an amount $s$ of coins to reach an $\epsilon$-$(k,t)$-robust equilibrium despite the presence of $t$ Byzantine, among which $f\leq t$ are detectable attackers and $k$ are rational players. 
While anyone can join the blockchain protocol after staking 
an amount $s$ of coins out the total $s+\ell$ that it owns,
only $n$ of these players are selected to run a consensus protocol to agree
on a unique block to append to the blockchain. 
Note that restricting consensus execution to $n$ players is easily done by listing the set of consensus participants in blocks~\cite{CNG21} and used in modern Byzantine fault tolerant blockchains~\cite{the_diem_team_diembft_2021,noauthor_aptos_2022,TG24}.

\mypar{The $\zeta$-uple spending attack}
We define a $\zeta$-uple spending attack, where $\zeta \geq 2$, as a generalisation of the double spending attack: While in a double-spending attack, \emph{attackers} typically create 2 branches to gain an amount as large as their liquidity, a $\zeta$-uple spending attack requires attackers to create $\zeta$ branches to gain an amount up to $\zeta-1$ times as large as their liquidity. 

For example, a $4$-uple spending attack can be successful after two consecutive successful double spending attacks: one that multiplies the initial liquidity $\ell$ by two, and a subsequent one that multiplies the resulting liquidity $2\cdot \ell$ by 2, leading effectively to quadrupling the initial liquidity $\ell$. In this case, the two consecutive double spending attacks occur at different indices of the chain.

%
We consider that honest players are uniformly distributed in the network, so we can represent the network as a disk. Attackers can create $\zeta$ branches by executing $\zeta$ successful \emph{isolation attacks} or delaying messages on the communication channels crossing $\zeta$ radii of the disk as depicted in Figure~\ref{fig:fig5}.
\begin{figure}[H] 
\begin{center}
\scalebox{0.5}{
\begin{tikzpicture}[line width=1.4pt]
 
\def\R{1.1}
\def\hsep{3.6}
 
\begin{scope}[xshift=0cm]
  \draw (0,0) circle (\R cm);
  \draw (0,0) -- (180:\R);
  \draw (0,0) -- (0:\R);
  \node[font=\huge] at (0, -1.7) {$\zeta=2$};
\end{scope}
 
\begin{scope}[xshift=\hsep cm]
  \draw (0,0) circle (\R cm);
  \nradii{3}{\R}
  \node[font=\huge] at (0, -1.7) {$\zeta=3$};
\end{scope}
 
\begin{scope}[xshift=2*\hsep cm]
  \draw (0,0) circle (\R cm);
  \nradii{4}{\R}
  \node[font=\huge] at (0, -1.7) {$\zeta=4$};
\end{scope}
 
\begin{scope}[xshift=3*\hsep cm]
  \draw (0,0) circle (\R cm);
  \nradii{5}{\R}
  \node[font=\huge] at (0, -1.7) {$\zeta=5$};
\end{scope}
 
\end{tikzpicture}
}
\caption{Increasing structural complexity of $\zeta$ isolation attacks resulting in $\zeta$ branches}
\label{fig:fig5}
\end{center}
\end{figure}

The key to avoid $\zeta$-uple spending attacks is by incentivizing rational players to betray attackers rather than colluding with them
by threatening to slash their stake $s$.
The following theorem gives the 
the minimum number of rational players that should betray the coalition for honest players to be informed of an attack.
Second, it gives 
a sufficiently large number of players from which a common message should be received.
The proof is deferred to Appendix~\ref{app:the2}.

\begin{theorem}\label{Theorem:the2}
If $t + k$ attackers attempt to carry out a $\zeta$-uple spending attack, and $\lfloor \frac{k+t}{2}-\frac{n}{6} \rfloor +1$ rational players are randomly selected from a set of $k$ rational players and they send message $m'$, then all honest players receive $m'$.
\end{theorem}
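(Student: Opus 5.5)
The proof is a quorum-intersection count in the style of accountable consensus~\cite{DBLP:conf/icdcs/CivitGG21}. I would show that any set of rational betrayers of the stated size, together with the honest players, reaches a threshold that every honest player's acceptance quorum must intersect. Under transferable authentication, a message $m'$ carrying evidence of the attack can then be relayed by any correct recipient until every honest player holds it.

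\textbf{Step 1 (partition bound).} Set $h = n-k-t$ for the number of honest players. To run a $\zeta$-uple spending attack, the $k+t$ attackers must make honest players in different partitions (Figure~\ref{fig:fig5}) each gather a decision quorum of size $n-t_0$ (equivalently about $\lceil 2n/3\rceil$) on conflicting blocks. Already for $\zeta = 2$, the two quorums $Q_1$ and $Q_2$ intersect in at least $2(n-t_0)-n$ players. Since the honest players in the two branches are disjoint, this intersection consists only of attackers. Hence the number of honest players confined to a single branch is at most $n-t_0-(k+t)$. For larger $\zeta$, each branch is only smaller, so it suffices to treat the worst case $\zeta = 2$.

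\textbf{Step 2 (betrayal threshold).} A betraying rational player $r$ sends $m'$, its signed equivocation proof, to all players. The goal is to show that once $b = \lfloor \frac{k+t}{2}-\frac{n}{6}\rfloor+1$ rational players betray, every honest player $p$ receives $m'$ before deciding. The argument is that $p$ waits for $n-t_0$ messages from its quorum, and that quorum contains at least $(k+t) - (n/3)$ attackers. The remaining attackers number at most $k+t-b$, and they cannot all avoid $p$'s quorum. Rearranging $b > \frac{k+t}{2}-\frac{n}{6}$ gives $2b > (k+t) - n/3$. So the betrayers exceed half of the attacker overlap that any quorum must contain. By pigeonhole, whichever way the scheduler splits the betrayers between $Q_1$ and $Q_2$, each quorum contains at least one betrayer. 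Therefore every honest player's quorum includes a sender of $m'$, and since the selection of betrayers is random, this holds regardless of which $b$ of them are chosen.

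\textbf{Step 3 (propagation after GST).} Once a correct player holds $m'$, the lockstep scheduler after GST delivers its relayed copy to all players in the next local round. Transferable authentication lets every honest player verify $m'$ even when it arrives indirectly. So all honest players receive $m'$.

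The main obstacle is Step 2. I must pin down exactly which overlap quantity the floor expression encodes: the term $n/6$ must correspond to half of $t_0 \approx n/3$ under the halving/pigeonhole argument. I would first try to write the overlap as $k+t-t_0$ and verify that $\lfloor (k+t-t_0)/2\rfloor + 1$ matches the stated bound up to the ceiling in $t_0$. I would then handle the rounding of $n/3$ separately for $n \bmod 3$.
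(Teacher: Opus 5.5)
Your reduction to two branches matches the paper, but Step 2 has a genuine gap. You know that each quorum contains at least $(k+t)-n/3$ attackers and that $2b>(k+t)-n/3$. Neither fact says whether a quorum contains a betrayer. There are $k+t-b$ non-betraying attackers, which exceeds $(k+t)-n/3$ whenever $b<n/3$ (always true here for $n>6$, since $k+t\le 2n/3$). So non-betrayers alone can supply every attacker a quorum needs. What you must bound is the number of attackers \emph{absent} from a branch's quorum, and then take $b$ larger than that.

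Your halving does not achieve this either. Write $a_1$ (resp.\ $a_2$) for the attackers reaching only branch one (resp.\ only branch two). Summing the two quorum constraints gives $a_1+a_2\le k+t-(n-2t_0)<(k+t)-n/3<2b$. So $b$ exceeds the \emph{average} of the two exclusive groups, not the larger one, and all $b$ betrayers may sit in the larger group. Step 1 also reverses an inequality: $n-t_0-(k+t)$ is the \emph{minimum} number of honest players a branch needs. The upper bound that two quorums actually force is $t_0$ honest players per branch.

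The missing idea is the paper's worst-case configuration, which is where the factor $\frac12$ comes from:
\begin{itemize}
\item there are two branches, each with $h/2$ honest players;
\item all $t$ Byzantine players reach both branches;
\item the rational players are partitioned as $k=2k_1+k_2$, with $k_2$ reaching both branches and $k_1$ reaching each branch exclusively.
\end{itemize}
The per-branch condition $\frac h2+k_1+t+k_2>\frac{2n}{3}$ then gives $k_1<\frac{k+t}{2}-\frac n6$. By pigeonhole, $k_1+1$ senders cannot all lie in one exclusive group, so each branch hears $m'$ before terminating.

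The even split of honest players is essential and cannot be recovered from quorum intersection alone. Take $n=300$, $t_0=t=99$, $k=100$, $h=101$. Put $2$ honest players in branch one and $99$ in branch two. Let $97$ rational players reach only branch one, while the other $3$ rational players and all $99$ Byzantine players reach both. Both quorums then still have $201=n-t_0$ members. Yet $50=\lfloor\frac{k+t}{2}-\frac n6\rfloor+1$ betrayers drawn from those $97$ never reach branch two.

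Step 3 does not repair this, because the statement concerns delivery before the branches decide. If eventual relay after GST were enough, a single betrayer would suffice and the threshold would be meaningless.
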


Ideally, rational players must find behaving more profitable than attacking ($\Gamma$ with $h_0 = \{Pass\}$). In addition, when Byzantine players decide to start an attack, betrayal should become more profitable than joining the coalition ($\Gamma$ with $h_0 = \{Attack\}$). As a result, betrayal and behaving should be two Nash equilibria.
Nevertheless, a Nash equilibrium only prevents a single rational player from increasing its utility by deviating (changing strategy) alone. Although necessary, this condition is not sufficient because we also need to prevent a coalition of players from deviating simultaneously. We discuss other variants of the Nash equilibrium in Section~\ref{sec:rw}.\\

Next, we need to introduce an equilibrium definition that allows to define a protocol
where rational players cannot gain more than a negligible amount $\epsilon$ by joining the attacking coalition rather than betraying it. This equilibrium is called the $\epsilon$-$(k,t)$-robust equilibrium~\cite{DBLP:conf/tcc/AbrahamDH08}.

Consider $\tau_{K \cup T}$ as the deviation strategy for the entire coalition, $\sigma_{K}$ as the betrayal strategy for $k$ rational players, $\tau_{T}$ as the deviation strategy for $t$ Byzantine players and $\rho_{(K \cup T)^{-}}$ as the strategy of honestly following the protocol for players who are not part of the coalition.

\begin{definition}[$\boldsymbol{\epsilon}$-{\boldmath $(k,t)$}-robust equilibrium]\label{def:def1}
If $\epsilon \geq 0$, then $\sigma$ is an $\epsilon$-$(k,t)$-robust equilibrium in game $\Gamma$ and $h_0=\{Attack\}$ if
\begin{equation} \label{eq:eq1}
u_i\left(\theta, \tau_{K \cup T}, \rho_{(K \cup T)^{-}}\right) \leq u_i\left(\theta, \sigma_{K}, \tau_T, \rho_{(K \cup T)^{-}}\right) + \epsilon.
\end{equation}
\end{definition}

Definition~\ref{def:def1} offers two interesting properties, the $k$-resilience and $t$-immunity.
First, $k$-resilience means that no coalition of up to $k$ rational players can increase their payoffs by more than a negligible amount by deviating from the protocol, even if they communicate with up to $t$ Byzantine players who help them and have knowledge of each other's types. Second, $t$-immunity refers to resilience against up to $t$ Byzantine players whose utilities are unknown. Regardless of the actions or type revelations of these $t$ players, the payoffs of the honest players should remain unaffected. Even if the Byzantine players communicate and coordinate, they should not be able to reduce the payoffs of others. Alongside this, note that a $(1, 0)$-robust equilibrium is a Nash equilibrium and a $(k,t)$-robust equilibrium is a $0$-$(k,t)$-robust equilibrium.

\section{\protocol: A Solution to the $\zeta$-uple Spending Attack}\label{sec:solution}
Our solution is the \protocol protocol (Secure and Tolerant Algorithm through (K,t)-robust Equilibrium), 
which implements a blockchain and prevents $\zeta$-uple spending attacks.  
If Byzantine players desire to establish an attack and form a coalition, rational players inside the coalition have two options: stay loyal to the coalition and gain the attack's profit; or betray, gain the protocol incentive, and alert honest players of the imminent attack, enabling them to stop it before it is successfully completed.

\begin{figure}
\includegraphics[width=\textwidth]{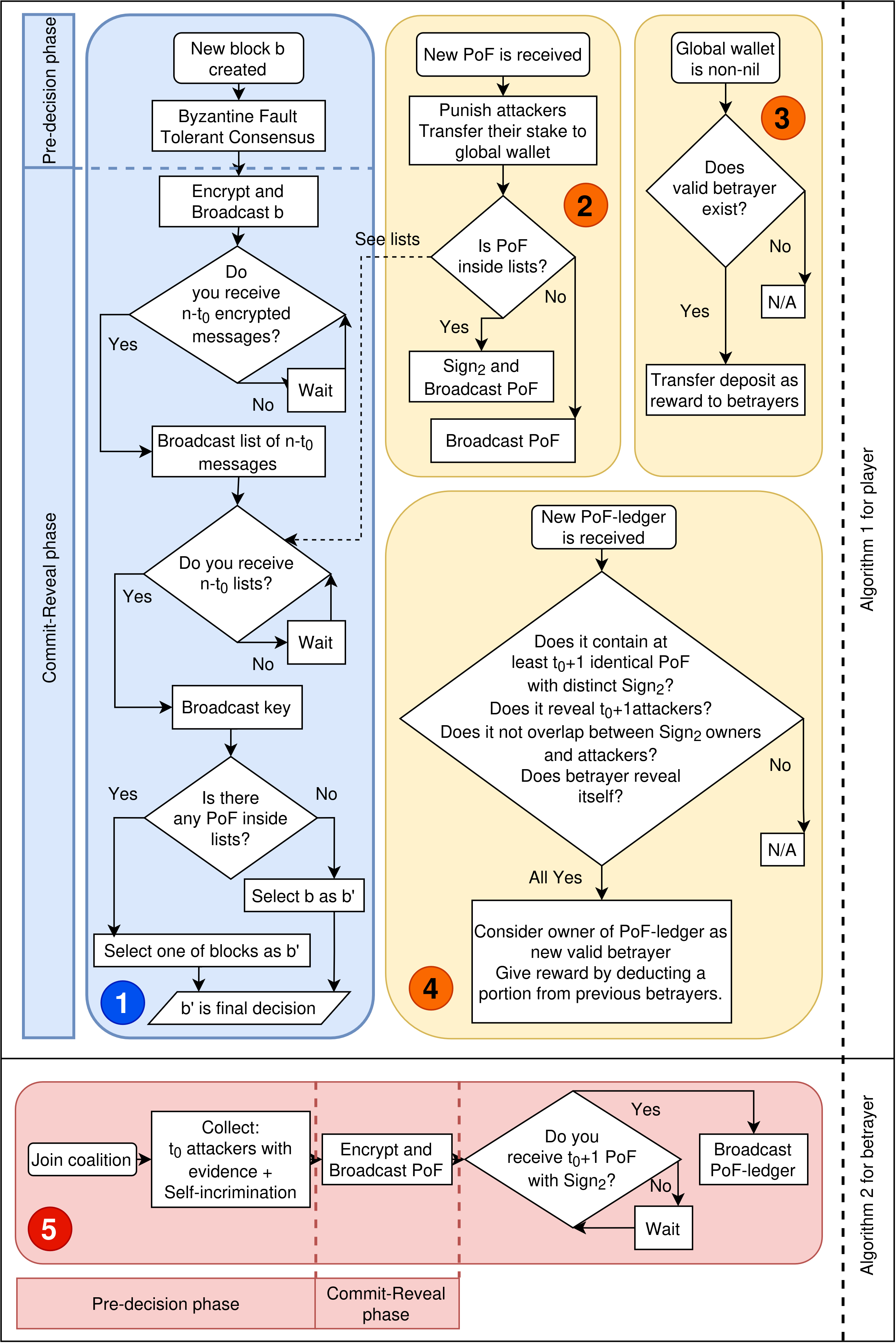}
\caption{The \protocol protocol consists of Algorithm 1 followed by any players and Algorithm 2 followed by betrayers. The pseudocode of these two algorithms is deferred to Alg.~\ref{alg:P} and~\ref{alg:P_betray} of the Appendix~\ref{app:pseudocode}.}
\label{fig:fig4}
\end{figure}

\protocol comprises different components illustrated in Figure~\ref{fig:fig4}. First, the algorithm executed by honest players is indicated in Flowcharts 1-4 of Figure~\ref{fig:fig4}. Second, the algorithm that rational may execute for maximizing their utility by betraying is depicted in Flowchart 5. In particular, Flowchart 1 describes the steps taken by a honest player each time a new block $b$ is proposed in order to reach consensus. Note that even though the first step is to invoke a Byzantine fault tolerance consensus protocol, Flowchart 1 continues in a series of steps to guarantee that attackers are identified. As a result, at the time the final decision,  block $b'$, is taken, the attackers will be identified (as we will show in Theorem~\ref{Theorem:the1}). This process is repeated for each index of the chain.

More specifically, as depicted in Flowcharts 1 and 5 of Figure~\ref{fig:fig4} and as detailed in the pseudocode of Appendix~\ref{app:pseudocode}, the \protocol protocol proceeds in two phases:

\begin{enumerate}
\item {\bf Pre-decision phase.}
In the pre-decision phase, all players execute a Byzantine fault-tolerant consensus to either agree on a block, denoted as $b$, or for Byzantine and rational players to communicate to form a coalition and cause a disagreement via a $\zeta$-uple spending attack. This consensus can be any arbitrary Byzantine fault-tolerant consensus 
that satisfies \textit{(i)}~safety (agreement and validity) provided that fewer than $1/3$ of the players are attackers, 
\textit{(ii)}~liveness (termination) provided that fewer than $2/3$ of the players are attackers and 
\textit{(iii)}~transferable authentication.
Examples of such algorithms include Polygraph~\cite{DBLP:conf/icdcs/CivitGG21}, ABC~\cite{CGGGK23} and ABC++~\cite{CCGG26}.
Due to its transferable authentication property, these protocols enable the identification of guilty players as well as the generation of their undeniable \emph{proof-of-frauds (PoFs)}. Under these circumstances, rational players who want to betray the coalition can pretend to join the coalition and collect contradictory signed messages to gather enough PoFs. At the end of the pre-decision phase, the output 
is an agreement on block $b$ if agreement is reached, or a list of unpublished and encrypted PoFs if the $\zeta$-uple spending attack was successful. For further details, please refer to line~\ref{line:abc-inv} of Alg.~\ref{alg:P} and lines~\ref{line:betray_step_1}--~\ref{line:betray_step_1_end} of Alg.~\ref{alg:P_betray} in Appendix~\ref{app:pseudocode}.
\item {\bf Commit-Reveal phase.}
In the decision phase, all players re-send their encrypted pre-decided block or list of PoFs. The encryption makes it impossible to distinguish betrayers from others or to determine whether a betrayal has occurred before a specified sub-phase; as a result, the Byzantine players cannot pretend to be rational or betrayal players. Because of the two phases, the attack is exposed prior to the final decision. As a result attackers are punished, their stake is slashed and redistributed as rewards to betrayers. Moreover, 
betrayers are not part of the players who run the consensus algorithm
for a predetermined number of rounds, with the protocol slashing a portion of their stake and returning it to them at a later stage. 
Even if a $\zeta$-uple spending attack is successful, 
all PoFs are guaranteed be be eventually received by all honest players who will then resolve forks by choosing one of the blocks in a deterministic fashion. For further details, please refer to lines~\ref{line:start}--~\ref{line:end} of Alg.~\ref{alg:P} in Appendix~\ref{app:pseudocode}.
\end{enumerate}

Flowcharts 2 to 4 in Figure~\ref{fig:fig4} run in parallel with the main function to identify and slash all attackers, define the criteria that a betrayer must meet to be considered valid, and reward these valid betrayers using the penalties collected from attackers. As noted above, betrayers construct a PoF according to Flowchart 5. When a player receives a new PoF, it validates it and transfers the stake of the attacker to the global wallet as a punishment. If the PoF meets the criteria, then the player re-signs it as $\text{Sign}_2$ and broadcasts it, as depicted in Flowchart 2. The corresponding betrayer collects $t_0+1$ of these re-signed PoFs as PoF-ledger and broadcasts it, as depicted in Flowchart 5. 
When a player receives a new PoF-ledger, it validates it. If the PoF-ledger meets the criteria of Flowchart 4, the player gives a reward to the new valid betrayer by deducting a portion of the reward of previous betrayers. The protocol distributes the reward equally among all valid betrayers, since partial synchrony provides no guarantee on message ordering and players cannot reliably determine which betrayer submitted their PoF-ledger first. As soon as the coins enter the global wallet, players transfer and divide the amount among all existing valid betrayers, according to Flowchart 3. The global wallet starts at nil and resets to nil after each transfer to the accounts of the betrayers. For further details, refer to lines~\ref{line:PoF}--~\ref{line:end_parallel} of Alg.~\ref{alg:P} and Alg.~\ref{alg:P_betray} in Appendix~\ref{app:pseudocode}.

\begin{definition}[Attack Attempt]
Let a distributed system consist of $n$ players $\Pi = \{p_1, \dots, p_n\}$, of which attackers $\mathcal{A} \subseteq \Pi$ try to establish $\zeta$-uple spending attack. An \textbf{attack attempt} occurs if 
there exist two sets
\[
Q_1,Q_2 \subseteq \Pi, \qquad Q_1\cap Q_2 \subseteq \mathcal{A} , \qquad |Q_1|,|Q_2| \geq n-t_0 , 
\]
such that every player in $Q_1$ has received a valid message $m_1$ and every player in $Q_2$ has received a valid message $m_2$ at the end of pre-decision phase, in which $m_1$ and $m_2$ are conflicting messages.
\end{definition}

Following the minimum number of rational players given by Theorem~\ref{Theorem:the2}, the following definition names the situation when a betraying coalition happens.
We say that a rational player \emph{betrays} when its chooses to execute Flowchart 5 from Figure~\ref{fig:fig4}.

\begin{definition}[Betraying Coalition]
When an attack attempt occurs, we say that a \emph{betraying coalition} exists if at least $\lfloor \frac{k+t}{2}-\frac{n}{6} \rfloor +1$ rational players betray.
\end{definition}

The following definition is inspired by the Accountable Agreement problem~\cite{DBLP:conf/icdcs/CivitGG21}, except that it requires to output all attackers and betrayers to apply incentives, hence the name.

\begin{definition}[Fully Accountable Agreement]\label{def:FAA}
A protocol solves the 
\emph{Fully Accountable Agreement} problem
if it satisfies all the following properties:
\begin{itemize}
    \item {\bf Detection}: All honest players eventually output all attackers.
    \item {\bf Convergence}: All honest players eventually output the same list of betrayers.
  \item {\bf Agreement}: All honest players end up with the same decision block.
  \item {\bf Validity}: The decided block was proposed by some honest player.
  \item {\bf Termination}: All honest players eventually output a decision block. 
\end{itemize}
\end{definition}

The following theorem states that \protocol solves the fully accountable agreement protocol. The detailed proof is deferred to Appendix~\ref{app:the1}.
This result will be key to show that \protocol implements a blockchain despite $\zeta$-uple attack attempts later. 

\begin{theorem}\label{Theorem:the1}
\protocol 
solves the Fully Accountable Agreement problem for each new index of the blockchain.
\end{theorem}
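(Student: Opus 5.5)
The plan is to check the five properties of Definition~\ref{def:FAA} one at a time, for a fixed index of the chain. The argument splits on whether the pre-decision consensus reaches agreement. If it does not, then by the definition an attack attempt has occurred.

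\textbf{Termination, Validity and Agreement.} Termination follows from the liveness guarantee of the underlying consensus (Polygraph, ABC or ABC++). That guarantee holds because fewer than $2/3$ of the players are attackers, since $t+k<n-t_0$. After GST the lockstep scheduler ensures that every message of the Commit-Reveal phase is delivered within a bounded number of local rounds, so every honest player eventually outputs $b'$. For Agreement and Validity there are two cases.
\begin{itemize}
\item If fewer than $n/3$ players deviate, the consensus already guarantees both properties, and the Commit-Reveal phase leaves $b$ unchanged.
\item Otherwise there is an attack attempt with quorums $Q_1,Q_2$. Here I would show that honest players still converge. Every honest player eventually receives all PoFs and the conflicting blocks, and then applies the same deterministic fork-resolution rule. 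Each candidate block was decided by a quorum of size $n-t_0$, which contains honest players, so it was proposed or validated by an honest player under the validity of the underlying consensus. Hence the chosen block satisfies Validity.
\end{itemize}

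\textbf{Detection.} The core of the Detection argument is the following. If $Q_1\cap Q_2\subseteq\mathcal{A}$ and $|Q_1|,|Q_2|\ge n-t_0$, then every attacker in the intersection signed conflicting messages, and transferable authentication turns each such pair into a PoF. Two facts must then be established:
\begin{itemize}
\item the intersection covers all attackers who equivocated, which are exactly the detectable ones;
\item these PoFs reach every honest player.
\end{itemize}
For the second fact I would invoke Theorem~\ref{Theorem:the2}. Under a betraying coalition, at least $\lfloor\frac{k+t}{2}-\frac{n}{6}\rfloor+1$ betrayers broadcast their encrypted PoF lists, and Theorem~\ref{Theorem:the2} guarantees that all honest players receive them. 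Even without betrayers, the honest players in $Q_1$ and $Q_2$ relay the messages they hold after GST. So all honest players eventually cross-check the conflicting messages and output the same set of attackers.

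\textbf{Convergence.} An honest player counts a player as a betrayer exactly when it receives a valid PoF-ledger containing $t_0+1$ $\text{Sign}_2$ re-signatures (Flowcharts 2 and 4). At least one of these signers is honest, and that honest player broadcasts its re-signature, so the ledger cannot be fabricated by attackers. The betrayer then broadcasts the ledger. By reliable delivery after GST, every honest player eventually validates the same ledgers. The validation criteria are deterministic, so the sets of valid betrayers eventually coincide. Ordering does not matter because rewards are split equally (Flowchart 3).

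\textbf{Expected obstacle.} I expect the hardest step to be Agreement in the attack-attempt case. I must show that honest players who have already pre-decided different blocks $b$ switch consistently to the same $b'$. They should do so only after the Commit-Reveal phase has exposed the fork, without any honest player finalizing prematurely. My plan is to argue that the encryption in the Commit-Reveal phase delays every final decision until all commitments are revealed. Once they are revealed, every honest player sees the conflicting commits and applies the same deterministic rule.
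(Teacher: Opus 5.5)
Your decomposition matches the paper's. You verify the five properties of Definition~\ref{def:FAA}, with Detection and Convergence coming from PoFs and the $t_0+1$ second signatures, Termination from liveness of the pre-decision consensus, and Agreement/Validity from deterministic fork resolution. The problem is the step you yourself flag as the crux: you resolve it with a mechanism that does not do the job.

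Encryption in the commit-reveal phase only hides the \emph{contents} of commitments until the keys are revealed. In the paper its role is to stop Byzantine players from learning early that a betrayal occurred. It does not make any conflicting commitment \emph{reach} a given honest player. Each honest player decrypts only what it collected in its waits for $n-t_0$ messages after lines~\ref{line:bcast} and~\ref{line:bcast2}. Since $|Q_1|\ge n-t_0$, a player in $Q_1$ can fill both waits with messages from $Q_1$ alone (honest peers plus attackers committing to $b_1$). It then gets an empty $\ms{pof-batch}$ and returns $b_1$ at line~\ref{line:single-block}, while a player in $Q_2$ returns $b_2$.

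What rules this out is the betraying-coalition hypothesis. The pigeonhole argument of Theorem~\ref{Theorem:the2} shows that with $\lfloor\frac{k+t}{2}-\frac{n}{6}\rfloor+1$ betrayers, every honest player receives a betrayer's encrypted PoF before the attack terminates. Lemma~\ref{lem:1betrayer} then uses the second broadcast of $\ms{list-enc-msgs}$ to place a common PoF in the $\ms{list-2-enc-msgs}$ of at least $t_0+1$ honest players. You invoke Theorem~\ref{Theorem:the2} only for Detection, where ``eventually'' suffices. Agreement needs it with this timing, and your case split never uses the betraying coalition. Your fallback, that honest players eventually learn all PoFs after GST, supports only the weaker ``end up with the same block'' reading via post-hoc fork resolution (the paper's informal narrative). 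It does not give the no-premature-finalization property you set out to prove.

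Your Termination argument also has a hole. Delivery after GST does not make those waits finish, because each needs $n-t_0$ distinct \emph{senders*}. Honest players number only $n-t-k<n-t_0$ whenever $t+k>t_0$, which is exactly the regime of interest. Lemma~\ref{lem:termination} closes this by assuming rational players keep broadcasting even when colluding, so that with $t\le t_0$ at least $n-t\ge n-t_0$ players send.

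Two smaller points:
\begin{itemize}
\item In Detection, ``the intersection covers all attackers who equivocated'' is asserted but never argued, and it is doubtful with $\zeta\ge 3$ branches. The paper obtains \emph{all} attackers differently: valid betrayers have an incentive to keep collecting extra PoFs, and honest players rebroadcast late PoFs at line~\ref{line:no_second_signature_bcast}.
\item In Convergence, ``at least one of the $t_0+1$ signers is honest'' does not follow from counting alone. The coalition members not named in that PoF can number $t_0+1$. The paper needs line~\ref{line:con3} together with the argument that rational players will not co-sign, because doing so dilutes their reward.
\end{itemize}
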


\section{Tolerating $\zeta$-uple spending attacks}\label{sec:Solution}
In this section, we prove that the \protocol protocol with an appropriate $s/\ell$ gives an $\epsilon$-$(k,t)$-robust equilibrium in the infinitely repeated game $\Gamma$ with $h_0=\{Attack\}$ (this corresponds to the left side of the game tree shown in Appendix~\ref{app:gameTree}, Figure~\ref{fig:fig1}). In particular, Algorithm 1 of the \protocol protocol corresponds to strategy $\rho$ while Algorithm 2 of the \protocol protocol corresponds to strategy $\sigma$ in Figure~\ref{fig:fig4}. Since the betrayal is a dominant strategy, the payoff of mutual betrayal is higher than mutual cooperation and betrayal is self-enforcing even in the infinitely repeated game. In addition, the \protocol protocol slashes betrayers' stake, but returns it to them plus the betraying reward as liquidity. As a result, 
they are not part of the players who run the consensus algorithm
and they cannot initiate another attack instantly. Notably, the Byzantine fault-tolerant consensus properties used in the pre-decision phase are crucial for ensuring that rational players adhere to the protocol when Byzantine players are inactive ($h_0=\{Pass\}$, i.e., the right side of the game tree shown in Appendix~\ref{app:gameTree}, Figure~\ref{fig:fig1}).

\subsection{Why a successful $\zeta$-uple spending attack, with $\zeta>2$, is rare}\label{sec:attack-complexity}

In Section~\ref{sec:problem} we showed that number of blockchain branches $f+k$ attackers can create is large.
In reality, isolating the network for more than two branches is hard due to the complexity of the network attack---this is probably the reason why we are not aware of any successful double spending attack against Bitcoin since 2013~\cite{doublespending}. In addition, the intermediary infrastructure acts as a communication proxy between players and the rest of the blockchain network, hiding their location~\cite{1011453672403}, and if one becomes unresponsive, players reconnect to another that may be in an unpredictable region and to a different Autonomous System (AS)~\cite{kiffer2021under}. Additionally, direct peering between ASes prevents BGP hijacking, making it more difficult to partition networks.
Later, we will explain how every attack attempt produces undeniable proofs-of-fraud, which leads to slashing, so the attacker cannot retry indefinitely with the same stake.
Given that blockchains typically create blocks regularly and that the probability $p$ of isolation attack success is particularly low, 
we assume that
%
the number of isolation attack attempts $m$ per block index multiplied by the probability $p$ of such an attack success is such that $mp < 3/e = 1.1$, which is small but not negligible. This bound will be useful to prove Theorems~\ref{Theorem:the4} and~\ref{Theorem:the5}.
\begin{proposition}\label{prop}
The number of isolation attack attempts $m$ multiplied by the probability $p$ of success is such that $mp < 3/e$.
\end{proposition}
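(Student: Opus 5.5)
The plan is to justify the bound by modelling the attacker's isolation attempts at a single block index as independent Bernoulli trials and then using the slashing mechanism of Theorem~\ref{Theorem:the1} to cap how many trials an attacker can afford. Let $X$ be the number of successful isolation attacks among the $m$ attempts made at one index. Each attempt succeeds with probability $p$, so $\mathbb{E}[X]=mp$.

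The intended reading of $mp<3/e$ is that the expected number of successful isolations per index, hence the expected number of branches beyond the first, is below a constant of order one. This is the regime in which Theorems~\ref{Theorem:the4} and~\ref{Theorem:the5} can exploit a Poisson-type tail, namely $\Pr[X\ge \zeta-1]\le (emp/\zeta)^{\zeta}$ or a similar bound. The threshold $3/e$ is chosen so that $emp/3<1$, which makes this tail decrease geometrically once $\zeta\ge 3$.

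The argument proceeds in three steps.

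\textbf{Bounding $m$.} By Theorem~\ref{Theorem:the1}, every attack attempt produces proofs-of-fraud that all honest players eventually receive, and the attackers are then slashed and lose consensus membership. So at a given index a coalition of at most $t+k<n-t_0$ members can launch only as many attempts as it has staked identities and scheduler slots before GST. Within one block interval, the slowly-adaptive adversary can change its strategy only between local rounds. Together these facts bound $m$ by the number of local rounds available before detection.

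\textbf{Bounding $p$.} Using the disk model of Figure~\ref{fig:fig5}, a successful isolation must delay every channel crossing a radius. Because honest players are uniformly distributed and reconnect to proxies in unpredictable autonomous systems, each crossing channel must independently fall under the adversary's control. This makes $p$ at most a small per-channel probability raised to the number of crossing channels.

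\textbf{Combining.} Multiplying the two bounds gives $mp$. One then argues that for realistic parameters, and consistent with the empirical absence of successful double spends on Bitcoin since 2013, the product stays below $3/e$.

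The main obstacle is that nothing in the model pins down $p$ numerically. The bound is therefore a modelling assumption rather than a consequence of the earlier theorems, and the honest way to present the result is as an assumption backed by the reasoning above. My first move would be to state it explicitly as a hypothesis on the network adversary, give the slashing-based bound on $m$ rigorously, and treat $p$ as an exogenous parameter that the empirical evidence justifies. Finally, I would check that the downstream Theorems~\ref{Theorem:the4} and~\ref{Theorem:the5} need only $emp/3<1$, so that the assumption is exactly as strong as it must be.
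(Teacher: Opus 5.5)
Your conclusion matches the paper: it gives no derivation of $mp<3/e$ and adopts it as a modelling assumption, motivated by the same informal points you list (slashing prevents indefinite retries, partitioning is hard because players reconnect through unpredictable ASes, no successful Bitcoin double spend since 2013), with $3/e$ chosen exactly so that Lemma~\ref{lem:lemma5}'s bound $\varepsilon=(e/\lambda)^3=(emp/3)^3$ is below one. Do not present your ``Bounding $m$'' and ``Bounding $p$'' steps as rigorous, since neither yields a numerical bound: the detection argument behind Theorem~\ref{Theorem:the1} assumes a betraying coalition, and even then it gives only finiteness of $m$ because pre-GST rounds are unbounded, while per-channel independence is not part of the model.
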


To upper-bound the probability of a $\zeta$-uple spending attack with $\zeta\geq 3$, we conservatively grant attackers up to $m \ge 3$ independent isolation attempts for more than two branches. As a result, we can use Chernoff Bounds~\cite{motwani1995randomized} to show that the probability of a successful $\zeta$-uple spending attack, where $\zeta\geq 3$, is at most $\varepsilon$. 
We defer the proof to Appendix~\ref{app:lemma5}.

\begin{lemma}\label{lem:lemma5}
Suppose attackers are able to launch up to $m$ isolation attacks, each of which succeeds independently with probability $p$. The probability of a successfully $\zeta$-uple spending attack, where $\zeta\geq 3$, is upper bounded by $\varepsilon = \left(e/ \lambda\right)^3$. 
\end{lemma}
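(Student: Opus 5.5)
The plan is to model the number of successful isolation attacks as a binomial random variable. Let $X=\sum_{j=1}^{m} X_j$, where $X_j$ is the indicator that the $j$-th isolation attack succeeds. The $X_j$ are independent Bernoulli$(p)$ variables, so $\mu=\mathbb{E}[X]=mp$.

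The first step is to reduce the event to a tail event. By the disk picture of Figure~\ref{fig:fig5}, a $\zeta$-uple spending attack with $\zeta\geq 3$ requires at least $\zeta\geq 3$ successful isolation attacks, one per radius and hence one per branch. Therefore
\[
\Pr[\zeta\text{-uple spending}, \zeta\geq 3]\leq \Pr[X\geq 3].
\]
This monotone reduction is the modelling step I would state explicitly. I would also note that multi-index compositions, such as two consecutive double spends, also consume at least three isolations when counted per branch. Alternatively, I would restrict to the per-index setting in which Proposition~\ref{prop} is stated.

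The second step applies the multiplicative Chernoff bound in its upper-tail form from~\cite{motwani1995randomized}:
\[
\Pr[X\geq (1+\delta)\mu]\leq \left(\frac{e^{\delta}}{(1+\delta)^{1+\delta}}\right)^{\mu}\leq \left(\frac{e}{1+\delta}\right)^{(1+\delta)\mu}.
\]
The second inequality holds because $e^{\delta\mu}\leq e^{(1+\delta)\mu}$. I would set $(1+\delta)\mu=3$, that is $1+\delta=3/(mp)$, and define $\lambda:=1+\delta=3/(mp)$. This gives
\[
\Pr[X\geq 3]\leq \left(\frac{e}{\lambda}\right)^{3}=\varepsilon.
\]

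The remaining point, and the only real obstacle, is checking that the Chernoff bound applies: it requires $\delta>0$, that is $\mu<3$. Proposition~\ref{prop} gives $mp<3/e<3$, so $\delta>0$ holds. Moreover $\lambda>e$, so $\varepsilon<1$ and the bound is nontrivial. This explains why the paper states the constant $3/e$ precisely. The main subtlety I expect is justifying the reduction in the first step and the definition of $\lambda$, which is not given in the statement. I would define $\lambda=3/(mp)$ explicitly and remark that $\varepsilon$ decreases cubically as $mp$ shrinks.
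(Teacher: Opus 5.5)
Your proposal is correct and follows the paper's proof. The paper likewise writes the number of successful isolations as a sum of $m$ independent Bernoulli$(p)$ variables and applies the Motwani--Raghavan Chernoff bound with $1+\delta=\lambda=3/(mp)$. It obtains $e^{3-3/\lambda}\lambda^{-3}<(e/\lambda)^3$, which is your bound with the relaxation $e^{\delta\mu}\le e^{(1+\delta)\mu}$ applied after substitution rather than before. Two of your steps add rigor the paper lacks: you check $\delta>0$ via Proposition~\ref{prop} (i.e.\ $mp<3$), and you keep the branch count separate from the success count through $\Pr[\zeta\ge 3]\le\Pr[X\ge 3]$, whereas the paper conflates the two and never verifies $\delta>0$.
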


\subsection{Attacks without bribery}\label{sec:proofs}
In this section, we show that the \protocol protocol from Section~\ref{sec:solution} is immune to $\zeta$-uple spending attacks in the default case where Byzantine players cannot bribe rational players by offering them money.
Hence, the only incentive for rational players to join the coalition is the personal gain stolen during the attack. We now consider the realistic case where the $\zeta$-uple spending attack, especially when $\zeta\geq 3$, is complex as shown in Section~\ref{sec:attack-complexity}. To this end, we demonstrate that \protocol is a $\epsilon$-$(k,t)$-robust equilibrium.
The proof of the following theorem is deferred to Appendix~\ref{app:the45}.

\begin{theorem}\label{Theorem:the4}
If $\frac{s}{\ell} > \frac{k\cdot p^2}{f+k}$, then \protocol protocol is an $\epsilon$-$(k,t)$-robust equilibrium against $\zeta$-uple spending attacks. 
\end{theorem}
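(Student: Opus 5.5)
The plan is to verify inequality~(\ref{eq:eq1}) of Definition~\ref{def:def1} directly. We compare the expected utility of a rational player $i$ under the full-coalition deviation $\tau_{K\cup T}$ with its utility under the betrayal strategy $\sigma_K$, both in the game $\Gamma$ with $h_0=\{Attack\}$. We then check that the case $h_0=\{Pass\}$ follows from the safety and liveness of the pre-decision consensus.

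\textbf{Step 1: utility of colluding.} Under $\tau_{K\cup T}$, an attack attempt either fails or succeeds. By Theorem~\ref{Theorem:the1}, both outcomes produce PoFs, and Detection guarantees that all honest players eventually output every attacker. Every colluding rational player therefore loses its stake $s$ with certainty. The gain of player $i$ is at most $(\zeta-1)\ell$ times the probability that a $\zeta$-uple spending attack succeeds. We split by $\zeta$:
\begin{itemize}
\item For $\zeta=2$, one isolation suffices. To obtain the $p^2$ of the statement, I would count two successful isolation events: one to create the fork, and one to keep the betrayal message $m'$ from reaching honest players in time. Theorem~\ref{Theorem:the2} guarantees delivery of $m'$ unless a further isolation succeeds.
\item For $\zeta\ge 3$, Lemma~\ref{lem:lemma5} bounds the success probability by $\varepsilon=(e/\lambda)^3$. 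Proposition~\ref{prop} ($mp<3/e$) keeps $\varepsilon$ small. This term is absorbed into the $\epsilon$ of Definition~\ref{def:def1}.
\end{itemize}
This gives roughly $u_i(\tau_{K\cup T})\le p^2\ell - s + O(\varepsilon\ell)$.

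\textbf{Step 2: utility of betraying.} Under $\sigma_K$, Theorem~\ref{Theorem:the1} ensures through Convergence that all valid betrayers are recognised. Flowcharts 3--4 then redistribute the slashed stakes of the $f$ detectable attackers equally among the betrayers. The betrayer's own stake is returned to it as liquidity, so it loses nothing. In the worst case all $k$ rational players betray, and each receives a reward of order $fs/k$.

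\textbf{Step 3: compare.} Combining the two steps, collusion is not better by more than $\epsilon$ provided $p^2\ell - s \le fs/k + \epsilon$. Rearranged, this reads $s(f+k)/k \ge p^2\ell$, which is exactly the hypothesis $s/\ell > kp^2/(f+k)$. Since betrayal then dominates at every stage, and betrayers are excluded from consensus for a number of rounds, no history-dependent strategy in the repeated game can do better. Betrayal is therefore self-enforcing across indices, which answers the Folk-theorem concern.

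\textbf{Main obstacle.} I expect the hardest step to be justifying exactly the factor $p^2$ and the $(f+k)$ denominator. This means fixing which isolation events must succeed, and showing that the reward-sharing accounting among up to $k$ betrayers (who possibly arrive out of order in partial synchrony) matches $fs/k$. If the direct accounting does not yield it, I would first try modelling the colluder's loss as $s$ plus its share of redistributed stakes forgone, so that the total pool of $(f+k)s$ appears naturally.
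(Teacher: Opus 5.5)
Your proposal is essentially the paper's proof. The paper likewise checks $BoR-CoR>BoC-CoC$ with $BoR=fs/k$ (all $k$ rational players betray and share the $f$ slashed stakes), $CoR=0$, $CoC=s$ and $BoC<\ell p^{2}+\epsilon$ (the $\zeta\geq 3$ term absorbed via Lemma~\ref{lem:lemma5}), which rearranges to $s/\ell>kp^{2}/(f+k)$. The one thing to fix is where $p^{2}$ comes from: in the paper's disk model two branches already require two independent successful isolations (two radii), whereas your extra isolation \emph{blocking $m'$} makes no sense under $\tau_{K\cup T}$, where no rational player betrays and so no $m'$ is ever sent.
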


\subsection{Attacks despite bribery}\label{sec:bribery-proofs}

In this section, we show that the \protocol protocol from Section~\ref{sec:solution} is immune to $\zeta$-uple spending attacks even when Byzantine players can use the gain of their attacks to bribe rational players by offering them money. Note that this can be implemented 
on modern blockchains by having Byzantine players who know each other implement a smart contract that redistributes the gain of the attacks exclusively to rational players. In what follows, we list the constraints on the relation between $k$, $f$ and $s$ for the \protocol protocol to be an $\epsilon$-$(k,t)$-robust equilibrium in the realistic scenario where $\zeta$-uple spending attacks are difficult as shown in Section~\ref{sec:attack-complexity}. The 
proof of the following theorem is deferred to Appendix~\ref{app:the45}.

\begin{theorem}\label{Theorem:the5}
If $\frac{s}{\ell} > p^2$, then the \protocol protocol is an $\epsilon$-$(k,t)$-robust equilibrium that protects against $\zeta$-uple spending attack and bribery. 
\end{theorem}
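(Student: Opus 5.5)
We prove Theorem~\ref{Theorem:the5} by verifying inequality~\eqref{eq:eq1} of Definition~\ref{def:def1} for each rational player $i$. The two sides are the expected utility of colluding under the strongest possible bribe and the expected utility of betraying. The proof reuses the structure of Theorem~\ref{Theorem:the4}. The only change is the collusion payoff: with bribery, the Byzantine players can hand their entire stolen gain to the rational players, so no division of the loot over the $f+k$ attackers can be exploited.

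We therefore bound the collusion gain per rational player assuming the worst-case split. Conservatively, a rational player may receive up to one additional liquidity $\ell$ per extra branch, rather than the share $\frac{k}{f+k}$ used in Theorem~\ref{Theorem:the4}.

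\textbf{Step 1: expected gain from colluding.} We condition on the number of successful isolation attacks.
\begin{itemize}
\item For $\zeta \geq 3$, Lemma~\ref{lem:lemma5} together with Proposition~\ref{prop} bounds the success probability by $\varepsilon=(e/\lambda)^3$. This case contributes at most a negligible term, which we absorb into $\epsilon$.
\item The remaining case is a double spend ($\zeta=2$). We model it as requiring two independent isolation successes, one to create each conflicting branch, which gives probability $p^2$.
\end{itemize}
With the bribe transferring the full gain, the expected extra liquidity of a colluding rational player is therefore at most $p^2\ell+\epsilon'$.

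\textbf{Step 2: the attack is exposed anyway.} By Theorem~\ref{Theorem:the1} (Detection), every attack attempt is eventually exposed through PoFs. All colluders, rational players included, are therefore slashed and lose $s$. The expected collusion utility is at most $p^2\ell - s + \epsilon'$.

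\textbf{Step 3: utility of betraying.} We invoke Theorem~\ref{Theorem:the2}. A betraying coalition of $\lfloor\frac{k+t}{2}-\frac{n}{6}\rfloor+1$ rational players suffices for every honest player to receive the PoFs, so the attack is stopped. Theorem~\ref{Theorem:the1} (Convergence) ensures every betrayer is recognised identically by all honest players. As the protocol specifies, a betrayer's stake is returned later as liquidity together with a nonnegative share of the slashed attackers' stake. The betrayal utility is therefore at least $0$ relative to the baseline.

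\textbf{Step 4: compare the two sides.} Inequality~\eqref{eq:eq1} reduces to $p^2\ell - s \le 0$ up to $\epsilon$. This holds strictly whenever $s/\ell>p^2$. Since this bound does not depend on $k$ or $f$, it also covers coalitions of up to $k$ rational players deviating jointly, which gives $k$-resilience. $t$-immunity follows from Agreement and Validity in Theorem~\ref{Theorem:the1}, since honest payoffs are unaffected.

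\textbf{Repeated game.} Because betrayers are removed from consensus for a fixed number of rounds and colluders lose their stake, each block index is a fresh instance with the same comparison. Betrayal is dominant at every stage, so it remains an equilibrium in the infinitely repeated game.

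\textbf{Main obstacle.} The hardest step is justifying the $p^2$ success probability for a double spend, and showing that the bribe cannot exceed $\ell$ per rational player in expectation. Such a bribe would have to come out of the attackers' own stolen liquidity, which is itself bounded by the success probability.
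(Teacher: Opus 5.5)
Your final threshold matches the paper's, but there is a genuine gap in Step 1, and Step 3 throws away exactly the term needed to repair it.

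\textbf{Step 1 underestimates the bribe.} Under bribery, each of the $f$ Byzantine attackers also duplicates its own liquidity. By your own premise, they pass this entire gain to the $k$ rational colluders. A colluding rational player therefore receives $(\zeta-1)\ell$ from its own duplicated liquidity, plus a $1/k$ share of $f(\zeta-1)\ell$. This is the paper's $\mathit{BoC}=\mathbb{E}[(\zeta-1)\,\ell\,(1+f/k)]<(1+f/k)p^2\ell+\epsilon$.

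Your claim that the bribe cannot exceed $\ell$ per rational player is false. It is off by the factor $1+f/k$, which is of order $n$ when $f=\lceil n/3\rceil-1$ and $k=1$, a configuration that appears in Table~\ref{table:analysis}. The justification in your \emph{main obstacle} paragraph bounds the pooled loot by the success probability, but not its per-capita share. You also misread Theorem~\ref{Theorem:the4}: there the rational player's collusion gain is already the full $p^2\ell$, and the factor $k/(f+k)$ comes from the betrayal side, not from splitting loot.

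\textbf{Step 3 cannot absorb the corrected bribe.} With the correct $\mathit{BoC}$, your lower bound ``betrayal utility $\ge 0$'' no longer suffices. It would require $s>(1+f/k)p^2\ell$, which is strictly stronger than the hypothesis $s/\ell>p^2$.

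\textbf{The missing ingredient.} You need the three betrayal-side lemmas:
\begin{itemize}
\item Lemma~\ref{lem:lemma2}: the slashed stake of the $f$ detected Byzantine attackers is redistributed to at most $k$ betrayers, so $\mathit{BoR}\ge fs/k$.
\item Lemma~\ref{lem:lemma3}: $\mathit{CoR}=0$.
\item Lemma~\ref{lem:lemma4}: $\mathit{CoC}=s$.
\end{itemize}
The condition $\mathit{BoR}-\mathit{CoR}>\mathit{BoC}-\mathit{CoC}$ then reads $\frac{fs}{k}>(1+\frac{f}{k})p^2\ell-s$. This is $(1+\frac{f}{k})(s-p^2\ell)>0$, which is exactly $s/\ell>p^2$.

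So the betrayal reward, funded by Byzantine stake, offsets the bribe, funded by Byzantine loot, term for term. Your two omissions, the $f/k$ bribe term and the $fs/k$ reward term, happen to cancel in the final inequality. Neither step is valid on its own.
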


\section{Analysis}\label{sec:analysis}

In this section, we analyze the situations under which the \protocol protocol tolerates the presence of $k$ rational players and $t$ Byzantine players out of which $f$ are detectable attackers. We also show that US\$500 can be a sufficient stake to secure a \protocol protocol blockchain against $\zeta$-uple spending attacks.

\subsection{The Impact of bribery on the required stake}
Figure~\ref{fig:grid} depicts the portions of stake over liquidity that \protocol should require to tolerate $f\leq t$ Byzantine attackers colluding with $k$ rational players in two scenarios where the probability of isolation attack success is $0.1$. The two scenarios correspond to whether Byzantine players can use their profit to bribe rational players.

\begin{figure}[ht!]
    \centering
    \begin{subfigure}{0.45\textwidth}
        \includegraphics[width=\linewidth]{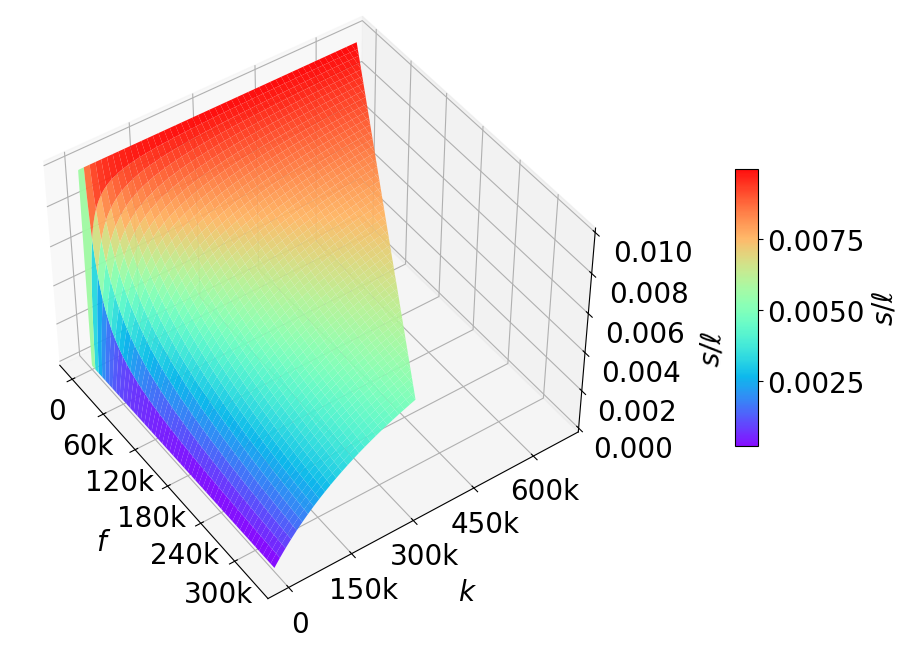}
        \caption{Attack without bribery, $p=0.1$\label{fig:comp_nobib}}
    \end{subfigure}
    \hfill
    \begin{subfigure}{0.45\textwidth}
        \includegraphics[width=\linewidth]{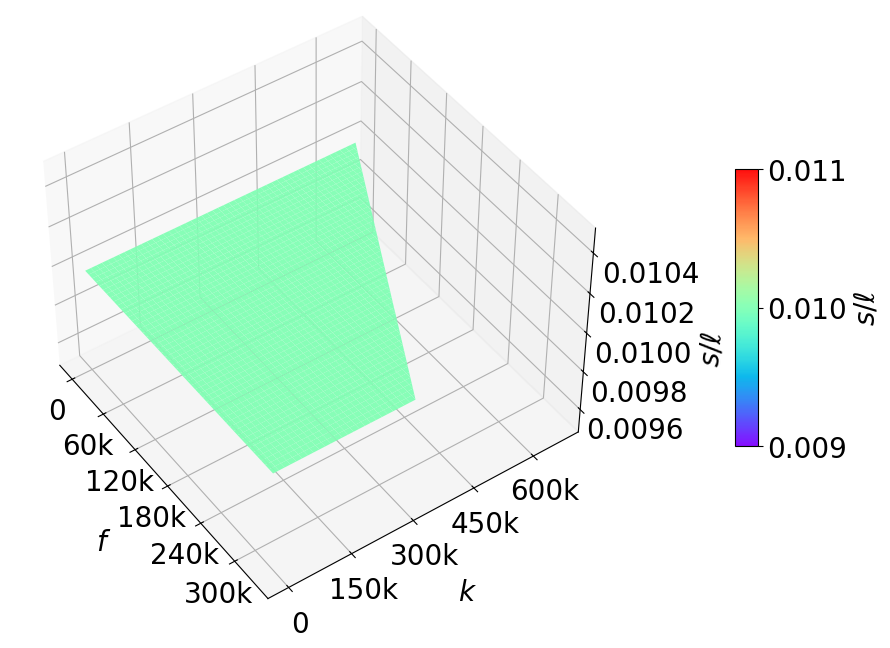}
        \caption{Attack despite bribery, $p=0.1$\label{fig:comp_bib}}
    \end{subfigure}
    \caption{The relation between $k$, $f$ and $\frac{s}{\ell}$ for $n=10^6$. Note that one colored scale is specific to each figure.}
    \label{fig:grid}
\end{figure}

The observation is that 
the bribery plays a significant role. As shown in Figure~\ref{fig:comp_nobib}, if the Byzantine players do not use their gain to bribe rational players to join their coalition, then the required stake over liquidity ratio $s/\ell$ should be sufficient to generate enough reward for betrayers.
As the number of detectable attackers decreases, this ratio should compensate the lack of reward to keep incentivizing rational players to betray.
Finally, Figure~\ref{fig:comp_bib} shows that if Byzantine players can bribe rational players, the possible configurations where the protocol works are more limited but do not require a high stake over liquidity ratio $s/\ell$.
Interestingly, when comparing Figures~\ref{fig:comp_nobib} to~\ref{fig:comp_bib}, one can see that although bribery requires a slightly higher staking over liquidity ratio, its impact on the overall trend is insignificant.

{\renewcommand{\arraystretch}{1.3}
\setlength{\tabcolsep}{10pt}
\begin{table}[ht]
    \centering
    \caption{The minimum ratio of stake $s$
    over liquidity $\ell$ needed to prevent double spending attacks (or $\zeta$-uple attacks with $\zeta \geq 2$), where $p$ is the success probability of an isolation attack,
    $f$ is the number of detected Byzantine players, $h$ is the number of honest players and
    $k = n-f-h$ is the number of rational players.   
    }\label{table:analysis}

    \begin{tabular}{l|cccc}
    \toprule
    & $p$ & $f$ & $h$ & $\min\left\{ \frac{s}{\ell}\right\}$ \\
    \midrule
    Attack without bribery & $0.001$ & $\lceil \frac{n}{3} \rceil - 1$ & $\lfloor \frac{n}{3} \rfloor + 1$ & $5 \cdot 10^{-7}$ \\
         & $0.01$  & $\lceil \frac{n}{3} \rceil - 1$ & $\lfloor \frac{n}{3} \rfloor + 1$ & $5 \cdot 10^{-5}$ \\
         & $0.1$   & $\lceil \frac{n}{3} \rceil - 1$ & $\lfloor \frac{n}{3} \rfloor + 1$ & $5 \cdot 10^{-3}$ \\
         & $0.1$ & $\lceil \frac{n}{3} \rceil - 1$ & $\lfloor \frac{2n}{3} \rfloor$ & $\frac{1}{\lceil n/3 \rceil} \cdot 10^{-2}$ \\
         & $0.1$ & $1$ & $\lfloor \frac{2n}{3} \rfloor$ & $(1-\frac{1}{\lceil n/3 \rceil}) \cdot 10^{-2}$ \\
         \hline
         Attack despite bribery & $0.001$ & $\lceil \frac{n}{3} \rceil - 1$ & $\lfloor \frac{n}{3} \rfloor + 1$ & $10^{-6}$ \\
         & $0.01$  & $\lceil \frac{n}{3} \rceil - 1$ & $\lfloor \frac{n}{3} \rfloor + 1$ & $10^{-4}$ \\
         & $0.1$   & $\lceil \frac{n}{3} \rceil - 1$ & $\lfloor \frac{n}{3} \rfloor + 1$ & $10^{-2}$ \\
         & $0.1$ & $\lceil \frac{n}{3} \rceil - 1$ & $\lfloor \frac{2n}{3} \rfloor$ & $10^{-2}$ \\
         & $0.1$ & $1$ & $\lfloor \frac{2n}{3} \rfloor$ & $10^{-2}$ \\
        \bottomrule
    \end{tabular}
\end{table}
%
}

\subsection{How much stake to secure the \protocol protocol}
Table~\ref{table:analysis} gives some exploitable numerical analysis to secure the \protocol blockchain against $\zeta$-uple spending attacks. First, note that we do not consider the case where $t \geq n/3$ as it is well-known that consensus (and thus blockchain) cannot be solved in an open network~\cite{dwork1988consensus} where the bound on message delays is unpredictable. Therefore, we upper-bound $f\leq t$ to the largest integer lower than $n/3$, which is $\lceil\frac{n}{3}\rceil-1$. In addition, we can afford having more than $n/3$ honest players, therefore we vary the number of honest players between the smallest integer larger than $n/3$, which is $\lfloor\frac{n}{3}\rfloor+1$, and the case where we have 
$\lfloor\frac{2n}{3}\rfloor$. In particular, note that at least one betrayer $k\geq 1$ is necessary, so when $f=\lceil\frac{n}{3}\rceil-1$, we have $h\leq \lfloor\frac{2n}{3}\rfloor$. 

We can see that the ratio of stake needed over liquidity is pretty low.
Specifically, when the probability of success of an isolation attack is particularly low ($p=10^{-3}$), a player needs to stake much less than its liquidity ($s = 5\ell\cdot 10^{-7}$) to secure the network. For example, this is equivalent to requiring someone to stake only US$\$500$ to allow them to transfer US\$1B per transaction. When the probability of success of an isolation attack raises to $0.1$, we can see that a player needs to stake at most $\ell\cdot 10^{-2}$ depending on the number of Byzantine players. For example, it is enough to stake US$\$10$ to allow to transfer US\$1K per transaction, regardless of the bribery situation. As expected, we also observe that the staking over liquidity $s/\ell$ ratio needed to secure the network grows when Byzantine players can bribe rational players.

\subsection{Comparison between bribery and without bribery}

One may wonder why, in the bribery situation, the numerical results differ from those in the situation without bribery. Actually, the answer to this question is that, without bribery, the only variable affecting the profit of rational players is the stake of Byzantine players, which the protocol distributes among betrayers. In a bribery situation, however, as the percentage of Byzantine players increases, they gain greater power to incentivize rational players to join the coalition.


\section{Related Work}\label{sec:rw}

\trap{}~\cite{ranchal2022trap} was the first consensus algorithm to use a baiting strategy to trade honest players for rational players. This is how it reduces the need of honest players from the traditional $\lfloor 2n/3 \rfloor +1$ threshold to $\lfloor n/2 \rfloor +1$, needed for its internal commit-reveal consensus protocol. Unfortunately, it can neither be used in a blockchain where the Folk theorem~\cite{31353} prevents  multi-shot consensus nor reach the $\lfloor n/3 \rfloor + 1$ honest threshold \protocol reaches.

To demonstrate tolerance to rational players, various equilibria have been proposed. Abraham et al.~\cite{DBLP:conf/tcc/AbrahamDH08} introduced the problem of $(k,t)$-robust equilibrium \protocol reaches.
The Perfect Bayesian Equilibria~\cite{YBPT23} copes with Byzantine and rational players but only to 
prevent the proposal of invalid blocks. Nash equilibria were used to determine rewards and the amount of delegated representatives for proof-of-stake governance~\cite{birmpas2025reward}, our goal is however different. 

Ouroboros~\cite{crypto2017:ouroboros} is a proof-of-stake blockchain that reaches a $\delta$-Nash equilibrium to mitigate selfish mining by limiting the gain of a deviating coalition to $\delta$ but requires a majority of the stake to be owned by honest players.
It builds upon Fruitchain~\cite{pass2017fruitchains}
that mitigates the same problem for proof-of-work blockchains
using the $n/2$-coalition-safe $3\delta$-Nash equilibrium where a 
majority of honest players is 
needed for no coalition to gain 
more than $3\delta$.

The Strong Nash Equilibrium (SNE)~\cite{aumarm2016acceptable} prevents a coalition of players to jointly deviate and strictly increase their payoff. However, a consensus protocol cannot be an SNE because there will always exist a coalition large enough to prevent consensus. 
A Coalition-Proof Nash Equilibrium (CPNE)~\cite{coalition1proof1equilibrium} requires all players to be rational, and cannot be used in our case where the utility function of Byzantine players is unknown.
Subgame-perfect pure Nash equilibrium (SPPNE) was used by 
Kelkar et al.~\cite{kelkar2025breaking} who also need all players in the system to be rational. 

Zhang et al.~\cite{Evolutionary1Game2Theory} consider evolutionary game theory where players' behavior is influenced by their neighbourhood while our players' behavior is influenced by their utility function.
%
%
Solidus~\cite{Abraham2016SolidusAI} 
is a blockchain protocol using positive and negative incentives to reach a $(k,t)$-robust equilibrium, however, it requires $k+t<\frac{n}{3}$ while \protocol works as long as $k+t<\frac{2n}{3}$.

ZLB~\cite{10646997} offers a zero-loss payment system for fungible assets as long as there are less than $5n/9$ Byzantine players. Instead of preventing double spending attacks, it resolves forks by reimbursing the victims. Instead \protocol 
protects against double spending attacks.
%

It is known to be impossible for  a state machine replication to reach a ($k,t$)-robust equilibrium in the general model~\cite{avarikioti2025cobra}.
The authors propose CoBRA that bypasses this impossibility by assuming synchrony. CoBRA builds on a previous impossibility result~\cite{Roughgarden} and reaches the same threshold, but it goes further; it ensures the slashed stake is never less than the attack's profit.
To avoid being vulnerable to unexpected delays, \protocol instead requires consensus participants to stake a small portion of their liquidity. 



\section{Discussion and Conclusion}\label{sec:conclusion}

One may wonder whether it is easier to ensure blockchain security in a network with a higher proportion of rational players than Byzantine players, given that Byzantine players target security regardless of the financial incentives. However, our results suggest the opposite. Actually, the answer to this question is that Byzantine players' staked punishment serves as the reward that motivates rational players to betray. Consequently, the more rational players there are, the higher the total reward is and the more Byzantine stake we need to cover the betrayal reward. 

In this paper, we presented the \protocol protocol to secure blockchain against double spending attacks with only $\lfloor n/3 \rfloor + 1$ honest players. To this end, we defined rewards and punishments to incentivize $k$ rational players to behave or betray $t$ Byzantine players.
Our protocol is modeled as an extensive game with imperfect information and perfect recall, and we show that, in an ideal environment, our protocol reaches a $(k, t)$-robust equilibrium with both $t<\frac{n}{3}$ Byzantine and $k=\frac{n}{3}$ rational players.
By remarking that it is difficult to conduct a successful  $\zeta$-uple spending attack, we show that the ratio of stake over liquidity, $s/\ell$, does not need to be high to secure a blockchain.

\subsection*{Acknowledgements}

This work is supported in part by 
the Australian Research Council (ARC) Discovery Project funding
scheme (\#250101739).

\bibliographystyle{splncs04}
\bibliography{references} 

\appendix

\section{Formal Definition of the Extensive Game}\label{app:gameTree}
The set of histories $H$ is given explicitly by
\begin{align*}
H = &\{\emptyset, (Attack), (Pass), (Attack, \neg Bribe), (Attack, Bribe)\} \\
    &\cup \{(Attack, b, a_1, \ldots, a_i) : i \in \{1, \ldots, k\}, b \in \{\neg Bribe, Bribe\}, \\
    &\qquad a_j \in \{Betray, Collude\} \, \forall j \leq i\} \\
    &\cup \{(Pass, a_1, \ldots, a_i) : i \in \{1, \ldots, k\}, a_j \in \{Attack, Commit\} \, \forall j \leq i\}.
\end{align*}

\begin{figure}
\scalebox{0.75}{
\begin{minipage}{\textwidth}
\begin{center}
\small
\begin{tikzpicture}[thin,
  level 1/.style={sibling distance=80mm},
  level 2/.style={sibling distance=40mm},
  level 3/.style={sibling distance=20mm},
  level 4/.style={sibling distance=10mm},
  every circle node/.style={minimum size=1.5mm,inner sep=0mm}]
  
  \node[circle,draw,label=above:$Byzantine$] (root) {}
    child { node [circle,draw,label=above left:$Byzantine$] {}
      child { 
        node[circle,fill,label=above:$R_1$] (node-L1) {}
          child {
            node[circle,fill] (node-L1A) {}
            child {
            node {$u_i(Betray)$}
            edge from parent
              node[left] {$Be$}}
            child {
             node {}
             edge from parent
               node[right] {$Co$}}
            edge from parent
                  node[left] {$Betray$}}
          child {
             node[circle,fill] (node-L1B) {}
            child {
            node {}
            edge from parent
              node[left] {$Be$}}
            child {
             node {$u_i(C)$}
             edge from parent
               node[right] {$Co$}}
             edge from parent
               node[right] {$Collude$}}
          edge from parent
            node[left] {$\neg Bribe$}}
      child { 
        node[circle,fill,label=above:$R_1$] (node-L2) {}
          child {
            node[circle,fill] (node-L2A) {}
            child {
            node {$u_i(B)$}
            edge from parent
              node[left] {$Be$}}
            child {
             node {}
             edge from parent
               node[right] {$Co$}}
            edge from parent
              node[left] {$Betray$}}
          child {
             node[circle,fill] (node-L2B) {}
            child {
            node {}
            edge from parent
              node[left] {$Be$}}
            child {
             node {$u_i(Collude)$}
             edge from parent
               node[right] {$Co$}}
             edge from parent
               node[right] {$Collude$}}
          edge from parent
            node[right] {$Bribe$}}
      edge from parent
        node[left] {$Attack$}}
    child { node [circle,fill,label=above:$R_1$] {}
      child { 
        node[circle,fill] (node-R1) {}
          child {
            node {$u_i(Attack)$}
            edge from parent
              node[left] {$Attack$}}
          child {
             node {}
             edge from parent
               node[right] {$Commit$}}
          edge from parent
            node[left] {$Attack$}}
      child { 
        node[circle,fill] (node-R2) {}
          child {
            node {}
            edge from parent
              node[left] {$Attack$}}
          child {
             node {$u_i(Commit)$}
             edge from parent
               node[right] {$Commit$}}
          edge from parent
            node[right] {$Commit$}}
       edge from parent
         node[right] {$Pass$}};
  
  \draw[dashed] (node-L1A) -- (node-L1B) 
     node[midway,above] {$R_2$};
  \draw[dashed] (node-L2A) -- (node-L2B)
     node[midway,above] {$R_2$};
  \draw[dashed] (node-R1) -- (node-R2)
     node[midway,above] {$R_2$};
  
\end{tikzpicture}
\end{center}
\end{minipage}
}
\caption{Extensive-form game with imperfect information and perfect recall for two rational players $R_1$ and $R_2$. $Be$ is an abbreviation of $Betray$ and $Co$ is an abbreviation of $Collude$.} \label{fig:fig1}
\end{figure}

Figure~\ref{fig:fig1} shows the game tree of $\Gamma$ for two rational players (for each additional rational player, a level will be added to the tree vertically). Byzantine players form a coalition and initiate an attack by controlling the scheduler in the first global round. 
In each global round, if rational players have sufficient power to initiate an attack, they do so, and this corresponds to the left side of the tree. 
The system security defines the probability distribution of Nature's move. 
We exclude honest players from the game tree as they always follow the protocol and only their number affects the utility function of rational players. 
Additionally, in the worst case scenario we focus on here, they make a coalition and we can consider them as a single player of the game. After the move of Nature or the scheduler, all rational players make their moves simultaneously under \emph{endogenous uncertainty}, meaning they are uncertain about another player's choice.

\section{Pseudocode for the \protocol protocol}\label{app:pseudocode}
\begin{algorithm}[p]
  \caption{\protocol protocol for player $i$} 
  \label{alg:P}
   \scriptsize{
    \begin{algorithmic}[1]
        \Part{{\bf State}}{
            \State $\ms{t_0}=\lceil \frac{n}{3} \rceil-1$
            \State $\ms{s_i}$, the stake of player $i$
            \State $\ms{\ell_i}$, the liquidity of player $i$
            \State $\ms{blockchain}$, the list of blocks, initially the genesis block
            \State $j$, index of the last block, initially 0
            \State $r$, block index
            \State $\mathit{Attacker} := (\mathit{id},\ s,\ \ell)$
            \State $\mathit{Betrayer} := (\mathit{id},\ s,\ \ell)$
            \State $\ms{pof}$, the structure of proof of fraud, $\ms{pof} := (\mathit{evidence},\ \mathit{attackers},\ \mathit{betrayer},\ r)$ where $\mathit{attackers} \subseteq \mathit{Attacker}$, $\mathit{betrayer} \in \mathit{Betrayer}$
            \State $\ms{pofs}$, the dictionary of all proofs of fraud, $\ms{pofs} : r \mapsto (\mathit{attackers}, \mathit{betrayers})$ where $\mathit{attackers} \subseteq \mathit{Attacker}$, $\mathit{betrayers} \subseteq \mathit{Betrayer}$, initially $\varnothing$
            \State $\ms{pof-ledger}$, the dictionary of second signatures of a single $\ms{pof}$, $\ms{pof-ledger} : signer \mapsto \ms{pof}$
            \State $\ms{key}$, the encryption private key of player for commit-reveal
            \State $x$, index of other players
            \State $\ms{list-enc-msgs}$, list of delivered encrypted messages from commit-reveal broadcasts, initially $\varnothing$
            \State $\ms{keys}$, list of delivered encryption private keys from the third group broadcasts, initially $\varnothing$
            \State $\ms{gwallet}$, global wallet, list of total slash stakes before and after allocating to betrayers, initially $\varnothing$
            \State $\ms{pof-batch}$, the set of proofs of fraud received during the commit phase, initially $\varnothing$
            \State $\mathrm{dom}(\cdot)$ denotes the domain (set of keys) of a map
        }\EndPart

        \Statex \rule{0.45\textwidth}{0.4pt}
        \Part{\textbf{Upon creation of new block $b$}}{
            \State ~~~~~$\ms{blockchain}[j] = \Call{append-block}{\ms{b}}$ \label{line:new-block}
            \State ~~~~~$j= j+1$
        }\EndPart
        
        \Statex

        \Part{\textbf{Upon receiving a new valid $\ms{pof}$}}{ \Comment{identical valid signature for conflict transactions}\label{line:PoF}
            \State ~~~~~$\ms{pofs}[r].\mathit{attackers} \leftarrow \ms{pofs}[r].\mathit{attackers} \cup \ms{pof}.\mathit{attackers}$
            \State ~~~~~$\ms{gwallet}[before][r] \mathrel{+}= \sum_{a \,\in\, \ms{pofs}[r].\mathit{attackers}} a.s$ \label{line:StoGlobalWallet}
            \State ~~~~~$\forall\, a \in \ms{pofs}[r].\mathit{attackers}: \; a.s = 0$ \Comment{slash attackers stake}\label{line:CoC}
            \State ~~~~~\textbf{if} $\ms{pof} \in \ms{pof-batch}$ \textbf{then}
            \State ~~~~~~~~~~$\ms{double-signed-pof}=\lit{sign}(\ms{pof})$\label{line:second_signature}
            \State ~~~~~~~~~~$\lit{broadcast}(\ms{double-signed-pof})$\label{line:second_signature_bcast}
            \State ~~~~~\textbf{else}
         $\lit{broadcast}(\ms{pof})$ \label{line:no_second_signature_bcast}
        }\EndPart

        \Statex

        \Part{\textbf{Upon receiving a new valid $\ms{pof-ledger}$}}{ \label{line:PoFledger}
            \State ~~~~~Let $b \in \mathrm{dom}(\ms{pof-ledger})$ be arbitrary.
            \State ~~~~~\textbf{if} 
            $\forall\, b_1, b_2 \in \mathrm{dom}(\ms{pof-ledger}): \ms{pof-ledger}[b_1] = \ms{pof-ledger}[b_2] \; \land$\label{line:con5} \Comment{contain identical $\ms{pof}$}
            \State ~~~~~$|\ms{pof-ledger}[b].\mathit{attackers}| = \ms{t_0}+1 \; \land$ \label{line:con1} \Comment{reveal $\ms{t_0}+1$ attackers}
            \State ~~~~~$|\ms{pof-ledger}| \geq \ms{t_0}+1 \; \land$ \label{line:con2} \Comment{contain $\ms{t_0}+1$ second signatures}
            \State ~~~~~$\mathrm{dom}(\ms{pof-ledger}) \cap \{a : a \in \ms{pof-ledger}[b].\mathit{attackers}\} = \varnothing \; \land$ \label{line:con3} \Comment{
            2nd signers are not  attackers
            }
            \State ~~~~~$\ms{pof-ledger}[b].\mathit{betrayer} \in \ms{pof-ledger}[b].\mathit{attackers}$ \label{line:con4} 
            \textbf{then} \Comment{reveal itself}
            \State ~~~~~~~~~~$r'=\ms{pof-ledger}[b].r$
            \State ~~~~~~~~~~$sz=\lvert \ms{pofs}[r'].\mathit{betrayers}\rvert$
            \State ~~~~~~~~~~$\mathit{new\_btr}=\ms{pof-ledger}[b].\mathit{betrayer}$
            \State ~~~~~~~~~~$\forall\, \beta \in \ms{pofs}[r'].\mathit{betrayers}: \; \beta.\ell \mathrel{-}= \frac{\lit{\ms{gwallet}[after][r']}}{(sz+1) \cdot sz}$ \Comment{
            take reward from prev. betrayers}
            \State ~~~~~~~~~~$\ms{pofs}[r'].\mathit{betrayers} \leftarrow \ms{pofs}[r'].\mathit{betrayers} \cup \{\mathit{new\_btr}\}$   
            \State ~~~~~~~~~~$\ms{pofs}[r'].\mathit{new\_btr}.\ell \mathrel{+}= \frac{\lit{\ms{gwallet}[after][r']}}{sz}$ \Comment{and paid to the new betrayer}\label{line:BoR2}
            \State ~~~~~\textbf{end if}
        }\EndPart
        
        \Statex
        
        \Part{\textbf{Upon $\ms{gwallet}[before][r]\neq \varnothing$}}{ \label{line:GlobalWallet}
            \State ~~~~~\textbf{if} $\ms{pofs}[r].\mathit{betrayers}\neq \varnothing$ \textbf{then}
            \State ~~~~~~~~~~$\forall\, \beta \in \ms{pofs}[r].\mathit{betrayers}: \; \beta.\ell \mathrel{+}= \frac{\lit{\ms{gwallet}[before][r]}}{\lvert \ms{pofs}[r].\mathit{betrayers}\rvert}$  \Comment{reward betrayers}\label{line:BoR}
            \State ~~~~~~~~~~$\ms{gwallet}[after][r]\mathrel{+}=\ms{gwallet}[before][r]$
            \State ~~~~~~~~~~$\ms{gwallet}[before][r]=0$
            \State ~~~~~\textbf{end if}
        }\EndPart \label{line:end_parallel}
          
        \algstore{alg:bating}
        
    \end{algorithmic}
   } 
\end{algorithm}

\begin{algorithm}[tp]
  \ContinuedFloat
  \caption{\protocol protocol for player $i$ (continued)} 
   \scriptsize{
    \begin{algorithmic}[1]

        \algrestore{alg:bating}

        \Statex

        \Function{append-block}{$\ms{b}$} \label{line:append-block}
            \State $(\ms{pre-decision} \mid \ms{pof})=\lit{byzantine-agreement(\ms{b})}$ \Comment{agree or identify guilty players} \label{line:abc-inv}
            \State $\ms{block} = \Call{commit-reveal}{\ms{pre-decision} \mid \ms{pof}, \ms{key}}$ \Comment{reveal decision} \label{line:commit-reveal-inv}
            \State \Return $\ms{block}$
        \EndFunction

        \Statex

        \Function{commit-reveal}{$\ms{block} \mid \ms{pof},\ms{key}$}\label{line:start}
            \State $\lit{encrypt}(\ms{block} \mid \ms{pof},\ms{key})$ \Comment{commit phase} \label{line:encrypt}
            \State $\lit{broadcast}(\ms{enc\_msg})$  \label{line:bcast}
            \While{$\lit{size(\ms{list-enc-msgs})}<n-\ms{t_0}$} \label{line:while}
                \State $\ms{list-enc-msgs}[x] \gets \ms{enc\_msg}_x$ \Comment{add to $\ms{list-enc-msgs}$ upon new $\lit{delivery}$}
            \EndWhile
            \State $\lit{broadcast}(\ms{list-enc-msgs})$  \label{line:bcast2}
            \While{$\lit{size(\ms{list-2-enc-msgs})}<n-\ms{t_0}$}
                \State $\ms{list-2-enc-msgs}[x] \gets \ms{list-enc-msgs}_x$ \Comment{add to $\ms{list-2-enc-msgs}$ upon new $\lit{delivery}$}
            \EndWhile
            \State $\lit{broadcast}(\ms{key})$ \Comment{reveal phase} \label{line:reveal} 
            \While{$\lit{receiving all required keys}$}
                \State $\ms{keys}[x] \gets \ms{key}_x$ \Comment{add to $\ms{keys}$ upon new $\lit{delivery}$}
            \EndWhile
            \State $(\ms{block} \mid \ms{pof-batch}) = \lit{decrypt} (\ms{list-enc-msgs},\ms{keys})$ 
            \If{$\ms{pof-batch} \neq \varnothing$}
                \State \Return $\lit{select-block} 
                (\ms{pof-batch})$ \label{line:resolve}
                \Comment{resolve forks}
                \State $\ms{pof-batch} \gets \varnothing$

            \Else
                \State \Return $\ms{block}$  \label{line:single-block} \Comment{return decision}
            \EndIf
        \EndFunction \label{line:end}

    \end{algorithmic}
   } 
\end{algorithm}

\begin{algorithm}[t]
  \caption{\protocol protocol for betrayer $i$} 
  \label{alg:P_betray}
   \scriptsize{
    \begin{algorithmic}[1]
        \Part{{\bf State}}{
            \State $r$, block index
            \State $\mathit{Attacker} := (\mathit{id},\ s,\ \ell)$
            \State $\mathit{Betrayer} := (\mathit{id},\ s,\ \ell)$
            \State $\mathit{Evidence} := (\mathit{id},\ signed\_message_1,\ signed\_message_2)$
            \State $\ms{pof}_i$, the structure of proof of fraud created by betrayer $i$, $\ms{pof}_i := (\mathit{evidence},\ \mathit{attackers},\ \mathit{betrayer},\ r)$ where $\mathit{attackers} \subseteq \mathit{Attacker}$, $\mathit{betrayer} \in \mathit{Betrayer}$
            \State $\ms{pof-ledger}_i$, list of double signed of the $\ms{pof}_i$, initially $\varnothing$
        }\EndPart

        \Statex \rule{0.45\textwidth}{0.4pt}
        \Part{\textbf{Upon joining a coalition in the $\mathit{pre\text{-}decision}$ phase (line~\ref{line:abc-inv}, Algorithm~\ref{alg:P})}}{\label{line:betray_step_1}
            \State ~~~~~$\ms{pof}_i.\mathit{betrayer} \leftarrow \mathit{Betrayer}_i$
            \State ~~~~~$\ms{pof}_i.\mathit{attackers} \leftarrow \ms{pof}_i.\mathit{attackers} \cup \{\mathit{Betrayer}_i\}$ \Comment{reveal itself} \label{line:betray_itself}
            \State ~~~~~$\mathit{pof_i.evidence} \gets \mathit{pof_i.evidence} \cup \{\mathit{Evidence}_i\}$ \Comment{its own conflicting messages}
            \State ~~~~~$\ms{pof}_i.\mathit{r} \leftarrow r$
            \State ~~~~~\textbf{while} {$|\ms{pof}_i.\mathit{attackers}|<t_0+1$}\textbf{ do} \Comment{reveal other attackers}
                \State ~~~~~~~~~~$\ms{pof}_i.\mathit{attackers} \leftarrow \ms{pof}_i.\mathit{attackers} \cup \{\mathit{Attacker}\}$ 
                \State ~~~~~~~~~~$\mathit{pof_i.evidence} \gets \mathit{pof_i.evidence} \cup \{e \in \mathit{Evidence} : e.id = \mathit{attacker}.id\}$
            \State ~~~~~\textbf{end while} \Comment{$\ms{pof}_i$ is attached to the output of $\lit{byzantine\text{-}agreement}$}
        }\EndPart \label{line:betray_step_1_end}
        
        \Statex

        \Part{\textbf{Upon receiving a $\ms{double-signed-pof}_i$ (line~\ref{line:second_signature}, Algorithm~\ref{alg:P})}}{\label{line:betray_step_2}
            \State ~~~~~$r'_1=\ms{double-signed-pof}_i.r$
            \State ~~~~~$\forall\, b \in \mathrm{dom}(\ms{pof-ledger}_i): r'_2=\ms{pof-ledger_i}[b].r$
            
            \State ~~~~~\textbf{if } $\{\ms{double-signed-pof}_i.signer \} \cap \{a : a \in \ms{pof-ledger}_i[b].\mathit{attackers}\} = \varnothing \; \land$ \Comment{no overlap...}
            \State ~~~~~ $r'_1=r'_2$
             \textbf{ then} \Comment{...between second signatories and attackers}
            \State ~~~~~~~~~~$\ms{pof-ledger}_i \leftarrow \ms{pof-ledger}_i \cup \ms{double-signed-pof}_i$
            \State ~~~~~\textbf{end if}
            \State ~~~~~\textbf{if }{$|\ms{pof-ledger}_i| = t_0+1$}\textbf{ then}
                \State ~~~~~~~~~~$\lit{broadcast}(\ms{pof-ledger}_i)$
            \State ~~~~~\textbf{end if}
        }\EndPart

    \end{algorithmic}
   } 
\end{algorithm}

The pseudocode of the \protocol protocol is depicted in Algorithms~\ref{alg:P} and~\ref{alg:P_betray}.
To implement a blockchain (that needs multi-shot consensus), \protocol repeats a single-shot consensus protocol  similar to the \trap{} consensus protocol~\cite{ranchal2022trap} in that it baits rational players to betray a coalition.
Each time a player $i$ creates a new block $b$ it tries to append it to the blockchain by invoking function $\Call{append-block}{b}$ at line~\ref{line:new-block}.
This function consists of invoking 
an accountable Byzantine agreement (line~\ref{line:abc-inv}) that either returns a unique block $\ms{pre-decision}$ or some proofs-of-fraud $\ms{pofs}$ attesting of guilty players attempting a $\zeta$-uple spending attack. A $\Call{commit-reveal}$ function is then invoked with this returned block as a parameter at line~\ref{line:commit-reveal-inv}. Each player exchanges the $\ms{pre-decision}$ or $\ms{pof}$ 
through broadcast in the $\Call{commit-reveal}$
function at line~\ref{line:bcast} before revealing them
at line~\ref{line:reveal}. This process ensures that once a block is pre-decided, it cannot be changed and is kept hidden until the reveal step at the end.
Provided that the $\ms{pof}$s are valid undeniable proofs-of-fraud, the stake of attackers is slashed (line~\ref{line:CoC}) and used to reward betrayers (line~\ref{line:BoR}) who revealed the $\ms{pof}$s. At the end, either a fork involving conflicting blocks induced by $\ms{pof}$s needs to be resolved (line~\ref{line:resolve}) or the unique block $\ms{pre-decide}$ is returned (line~\ref{line:single-block}).

\section{Proof of Theorem~\ref{Theorem:the2}}\label{app:the2}
\begin{lemma}\label{lem:lemma1}
If the number $t+k$ of attackers is strictly larger than the specific threshold 
$\frac{\frac{2}{3}\cdot \zeta-1}{\zeta-1} n$, 
i.e,  
$t+k>\frac{\frac{2}{3}\cdot \zeta-1}{\zeta-1} n$,
then they can establish a $\zeta$-uple spending attack where $\zeta \in \mathbb{Z}^+$.
\end{lemma}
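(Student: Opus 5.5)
We argue by construction. We partition the honest players into $\zeta$ disjoint groups $H_1,\dots,H_\zeta$ and let the attackers equivocate, so that each group $H_j$ together with all $t+k$ attackers decides a distinct conflicting block $m_j$. In partial synchrony, the adversary controls message delays before GST. It can therefore keep every $H_j$ isolated from the other honest groups long enough for the pre-decision consensus to terminate inside $H_j\cup\mathcal{A}$. The attack then succeeds as soon as each such set reaches the decision quorum of size $n-t_0$, which we take to be roughly $2n/3$. This is exactly the quorum-intersection condition in the definition of an attack attempt: any two of the sets $Q_j=H_j\cup\mathcal{A}$ intersect only in $\mathcal{A}$.

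\textbf{Counting step.} There are $h=n-(t+k)$ honest players. Splitting them evenly gives each group about $h/\zeta$ members. The requirement $|Q_j|\ge \frac{2}{3}n$ for every $j$ becomes
\[
t+k+\frac{n-(t+k)}{\zeta}\ \ge\ \frac{2}{3}n .
\]
Multiplying by $\zeta$ gives $\zeta(t+k)+n-(t+k)\ge \frac{2}{3}\zeta n$. Rearranging yields
\[
(t+k)(\zeta-1)\ \ge\ \left(\tfrac{2}{3}\zeta-1\right)n,
\]
which is the threshold $t+k\ge\frac{\frac{2}{3}\zeta-1}{\zeta-1}n$. Assuming the strict inequality of the statement leaves slack. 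That slack absorbs the rounding that occurs because the $n-t_0$ quorum and the $h/\zeta$ group sizes must be integers; we handle this by giving the extra $h \bmod \zeta$ honest players to some groups arbitrarily, since only the smallest group matters.

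\textbf{Feasibility of the execution.} It remains to check that the adversary can realise $\zeta$ simultaneous decisions. Each attacker signs conflicting votes, one per group. The scheduler delays all cross-group honest messages until after every group has decided; this is permitted because the adversary has arbitrary control before GST. The consensus used in pre-decision has safety only below $n/3$ attackers but liveness with quorum $n-t_0$. Within each $Q_j$ a full quorum therefore exists, and the protocol decides $m_j$ as if the remaining players were merely slow. This yields $\zeta$ branches, i.e.\ a $\zeta$-uple spending attack. For $\zeta=1$ the threshold reads $n/3\cdot(\text{negative})/0$ and is degenerate, so we restrict attention to $\zeta\ge2$, the meaningful case. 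For $\zeta=2$ the bound recovers the classical $t+k>n/3$.

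\textbf{Main obstacle.} The main difficulty is the rigorous justification that a generic accountable BFT protocol must decide inside an isolated quorum. We would rely on the stated liveness property together with an indistinguishability argument. Honest players in $H_j$ cannot distinguish the run from one in which the players outside $Q_j$ have crashed or are slow. Since $|Q_j|\ge n-t_0$, that run must terminate. Integer rounding at the boundary is a secondary concern handled by the strict inequality.
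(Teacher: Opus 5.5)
\textbf{The gap: rounding.} You argue the direction the statement actually asserts (threshold $\Rightarrow$ attack). The argument breaks exactly where you claim the strict inequality ``absorbs the rounding.''

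The hypothesis only gives $t+k+h/\zeta>2n/3$. But every branch needs $t+k+\lfloor h/\zeta\rfloor\ge n-t_0=\lfloor 2n/3\rfloor+1$, or $\ge 2n/3$ under your own relaxed reading. The floor can cost up to $(\zeta-1)/\zeta$, the true quorum lies strictly above $2n/3$, and the slack from the strict inequality can be smaller than these losses. The even split already maximizes the smallest group, so no other partition helps.

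\textbf{Counterexample.} Take $n=11$, $\zeta=3$, $t+k=6$ (say $t=k=3$, which respects $t\le t_0=3$ and $t+k<n-t_0=8$). The hypothesis holds because $6>5.5$. But splitting the $h=5$ honest players into three groups leaves one group with at most one honest player, so $|Q_j|\le 7<22/3<8=n-t_0$. The same problem appears for $\zeta=2$, so your remark that the bound recovers the classical $t+k>n/3$ holds only up to rounding. With $n=9$ and $t+k=4>3$, the smaller of the two honest groups gives $|Q_j|=6<7=n-t_0$. Read literally as a sufficiency claim with this threshold, the statement is false, so your proof cannot be completed as written.

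\textbf{What your counting does prove, and how the paper differs.} Your argument yields the exact integer condition. Each group needs $q=n-t_0-(t+k)$ honest players, so the partition attack is feasible iff $\zeta q\le h$, i.e.\ iff $(\zeta-1)(t+k)\ge \zeta(n-t_0)-n$. Combined with your scheduling and indistinguishability argument, this gives a correct sufficiency result with that threshold. That argument is fine in outline, modulo arranging that the attackers steer each isolated quorum to its own block $m_j$.

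The paper's proof goes the other way. It assumes the $\zeta$ branches exist and writes $h_j+t+k>2n/3$ for every $j$, together with $\sum_j h_j=n-(t+k)$. Summing these gives $t+k>\frac{\frac{2}{3}\zeta-1}{\zeta-1}\,n$. This shows only that the threshold is \emph{necessary}. It is nonetheless sound, since integrality only helps in that direction. It is also the direction used later, to cap $\zeta\le\lfloor\frac{3n-3(f+k)}{2n-3(f+k)}\rfloor$ in the proof of Lemma~\ref{lem:lemma5}.

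\textbf{Repair.} Either prove the necessity direction by this summation, or keep your construction and replace the hypothesis by the integer condition above.
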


\begin{proof}
In a $\zeta$-uple spending attack, attackers should create $\zeta$ disjoint groups of honest players to be able to create $\zeta$ branches and spend their money $\zeta$ times:
\begin{gather*}
\begin{cases}
h_1+(t+k)>\frac{2n}{3}\\
\vdots\\
h_\zeta+(t+k)>\frac{2n}{3}\\
h_1+\cdots +h_\zeta=h=n-(t+k)\\
\end{cases}
\Rightarrow n-(t+k) + \zeta \cdot (t+k)>\zeta\times \frac{2n}{3}\\
\Rightarrow t+k>\frac{\frac{2}{3}\cdot \zeta-1}{\zeta-1} n.
\end{gather*}
And the proof follows. \qed
\end{proof}

\begin{theorem}[Theorem~\ref{Theorem:the2}]
If $t + k$ attackers attempt to carry out a $\zeta$-uple spending attack, and $\lfloor \frac{k+t}{2}-\frac{n}{6} \rfloor +1$ rational players are randomly selected from a set of $k$ rational players and they send message $m'$, then all honest players receive $m'$.
\end{theorem}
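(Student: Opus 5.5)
The proof hinges on a counting argument: the betrayers, together with the honest players, must form a quorum of size greater than $2n/3$. Quorum intersection then forces every honest partition to meet this set, and transferable authentication lets $m'$ propagate through it. First I use Lemma~\ref{lem:lemma1} with $\zeta=2$, the weakest attack and therefore the one requiring the fewest attackers. Setting $\zeta=2$ in the lemma's threshold, a coalition of $t+k$ players can create two branches only if $t+k>n/3$. Since $t+k<n-t_0$, the honest set $h=n-(t+k)$ is nonempty, and the attack splits it into groups $h_1,\dots,h_\zeta$ with $h_j+(t+k)>2n/3$ for each $j$, as in the system in the proof of Lemma~\ref{lem:lemma1}.

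Next I find how many attackers must defect so that the attack can no longer keep any honest group isolated. Let $b$ be the number of betrayers, each sending $m'$ and relaying it to every honest player it talks to. I require that the honest players plus the betrayers number more than $2n/3$, that is, $h+b>2n/3$. Under this condition any quorum of size $n-t_0$ that an honest player waits for must contain an honest player or a betrayer who has already seen $m'$. Substituting $h=n-(t+k)$ gives $b>(k+t)-n/3$. To reach the stated bound I will argue that it suffices to cover half of this deficit: split the coalition symmetrically between the two branches, so that each half of size $(k+t)/2$ sits in the two quorums. It then suffices that betrayers exceed $\frac{k+t}{2}-\frac{n}{6}$, which is exactly half of $(k+t)-n/3$. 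The smallest integer strictly exceeding this quantity is $\lfloor \frac{k+t}{2}-\frac{n}{6}\rfloor+1$.

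I then use the random selection to show that this number of betrayers, $\lfloor \frac{k+t}{2}-\frac{n}{6}\rfloor+1$, lands in every quorum regardless of how the attackers partition. If betrayers are chosen uniformly from $K$, the adversary cannot place them all on one side. Moreover, since a quorum of size greater than $2n/3$ on any branch must include more than half of the coalition minus $n/6$ of its members, every quorum contains a betrayer. From there, transferable authentication, together with the post-GST lockstep delivery from the Model section, ensures that the betrayer's relayed $m'$ reaches each honest player, who re-broadcasts it, so all honest players eventually receive $m'$.

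The step I expect to be hardest is justifying the factor of one half, and in particular why randomly selected betrayers hit each branch's quorum rather than only the global count $h+b>2n/3$. My first attempt will be a pigeonhole argument. Each branch quorum consists of at least $2n/3-h_j$ coalition members, and two quorums for conflicting values overlap only in attackers. So if the betrayer set has size greater than $\frac{k+t}{2}-\frac{n}{6}$, it cannot avoid both of the coalition portions that the quorums must contain. If that fails, I would instead settle for a probabilistic "with high probability" version based on the random selection.
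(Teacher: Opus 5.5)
There is a genuine gap, and it is the factor $\frac12$ you flagged; your main derivation does not supply it.

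\emph{The first count measures the wrong thing.} The condition $h+b>\frac{2n}{3}$ is not the relevant one. A branch's quorum of $n-t_0$ messages may contain many honest players, but they can all sit in that same isolated branch and never see $m'$. Only betrayers inside the quorum matter.

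\emph{The halving does not describe the attack.} ``Split the coalition symmetrically so each half of size $\frac{k+t}{2}$ sits in one quorum'' is not what happens. Each branch quorum must contain \emph{more} than half the coalition, and the two quorums share most of it.

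\emph{Your ``moreover'' inequality has the wrong sign.} With $h_j=\frac h2$, a branch quorum contains more than $\frac{2n}{3}-\frac h2=\frac{k+t}{2}+\frac n6$ coalition members. That is, it omits \emph{fewer} than $\frac{k+t}{2}-\frac n6$ of them.

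\emph{Eventual delivery makes the count pointless.} Relying on eventual post-GST delivery would make one betrayer enough. The count is only meaningful under the paper's reading: $m'$ must reach each branch within its first $n-t_0$ messages, before that branch terminates.

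The fix is the per-branch count your last paragraph gestures at, and this is exactly the paper's proof. It needs a hypothesis you never state: the honest players are split evenly, $\frac h2$ per branch. With $\zeta\ge 3$ equal groups each omitted set only shrinks, so two branches is the binding case.

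The paper writes $k=2k_1+k_2$. Here $k_2$ rational attackers are heard by both branches and $k_1$ are heard by only one. The branch constraint $\frac h2+k_1+t+k_2>\frac{2n}{3}$ gives $k_1<\frac{k+t}{2}-\frac n6$. Hence $\lfloor\frac{k+t}{2}-\frac n6\rfloor+1\ge k_1+1$ betrayers cannot all lie among the $k_1$ attackers a given branch misses. This is pigeonhole, and it is deterministic; the random selection plays no role. What you need is that the betrayers meet \emph{each} branch's quorum, which follows branch by branch. The overlap of the two quorums is irrelevant, and ``cannot avoid both'' is the wrong form of the conclusion.

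Without the even split the bound is false. The adversary can take $h_2$ just above $\frac{2n}{3}-(t+k)$. Then the other branch's quorum can omit almost $(k+t)-\frac n3$ coalition members, twice the stated count, so all the betrayers can lie outside it. Your unhalved threshold $(k+t)-\frac n3$ is in fact essentially the right one when the adversary chooses the split. That holds for the omitted-set reason above, not because of $h+b>\frac{2n}{3}$.
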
 

\begin{proof}
For a successful $\zeta$-uple spending attack, each branch should have more than $\frac{2n}{3}$ players. The worst case for attackers is that they have more than $t_0+1$ players but create only two branches. This means that the remaining $t+k-t_0-1$ attackers are redundant and the attack succeeds even if the messages from these redundant attackers reach one of the two branches. Now suppose that one of these redundant attackers betrays (sends $m'$); it may be unable to notify honest players of both branches before they terminate. We use the pigeonhole principle to find the minimum number of betrayers we need to notify all honest players before the attack terminates. We assume $k=2k_1+k_2$ and partition the k attackers into three disjoint groups. Before termination, both branches receive messages from $k_2$, branch one receives messages from half of $2k_1$ and the other branch receives messages from the other half of $2k_1$. As a result, if $k_1+1$ attackers send message $m'$, all honest players receive that before termination.
\begin{gather*}
\begin{cases}
\frac{h}{2}+k_1+t+k_2>\frac{2n}{3}\\
k=2k_1+k_2\\
n=h+t+k
\end{cases}
\Rightarrow k_1<\frac{k+t}{2}-\frac{n}{6}.
\end{gather*}
According to the pigeonhole principle, by selecting $\lfloor \frac{k+t}{2}-\frac{n}{6} \rfloor +1$ rational players randomly and having them send message $m'$, then all honest players receive at least one $m'$. \qed
\end{proof}

\section{Proof of Theorem~\ref{Theorem:the1}}\label{app:the1}
\begin{lemma}\label{lem:1betrayer}
If we are in the presence of a betraying coalition, then at least one betrayer will be identified eventually as a valid betrayer by all honest players in Algorithm~\ref{alg:P}.
\end{lemma}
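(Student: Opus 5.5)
We follow one betrayer $\beta$ through Algorithm~\ref{alg:P_betray} and then through the PoF-ledger handler of Algorithm~\ref{alg:P}. The goal is to show that every validity check at lines~\ref{line:con5}--\ref{line:con4} is eventually satisfied at every honest player.

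\textbf{Step 1: some betrayer's PoF reaches every honest player and enters the pof-batch.} Since a betraying coalition exists, at least $\lfloor \frac{k+t}{2}-\frac{n}{6} \rfloor +1$ rational players betray. By Theorem~\ref{Theorem:the2}, a message sent by that many rational players reaches all honest players. We take the PoF each betrayer attaches to the output of the pre-decision phase as the message $m'$. Because an attack attempt occurred, there are two quorums $Q_1,Q_2$ of size at least $n-t_0$ whose intersection lies in $\mathcal{A}$. Their intersection therefore has at least $n-2t_0\geq t_0+1$ members, all attackers, and each of them signed both conflicting messages. By transferable authentication, a betrayer who pretended to collude holds these conflicting signed messages. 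So the loop in Algorithm~\ref{alg:P_betray} terminates with $|\ms{pof}_i.\mathit{attackers}|=t_0+1$, and the betrayer includes itself (line~\ref{line:betray_itself}). This PoF is encrypted in the commit phase of \Call{commit-reveal}{}. After GST, every honest player eventually decrypts it and places it in $\ms{pof-batch}$.

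\textbf{Step 2: the betrayer collects $t_0+1$ second signatures from non-attackers.} When an honest player receives a valid $\ms{pof}$ that is in its $\ms{pof-batch}$, it re-signs and broadcasts it (lines~\ref{line:second_signature}--\ref{line:second_signature_bcast}). There are at least $\lfloor n/3\rfloor+1\geq t_0+1$ honest players. No honest player is in $\ms{pof}.\mathit{attackers}$, since honest players never equivocate and the evidence is unforgeable under the PKI. Hence after GST the betrayer receives at least $t_0+1$ double-signed PoFs that pass its check in Algorithm~\ref{alg:P_betray}; they carry the same index $r$ and their signers are disjoint from the attackers. It then broadcasts $\ms{pof-ledger}_i$.

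\textbf{Step 3: every honest player validates the ledger.} We check each condition in turn:
\begin{itemize}
\item identical PoFs (line~\ref{line:con5}) hold by construction;
\item exactly $t_0+1$ attackers (line~\ref{line:con1}) holds by Step~1;
\item at least $t_0+1$ second signatures (line~\ref{line:con2}) holds by Step~2;
\item disjointness of signers and attackers (line~\ref{line:con3}) holds by Step~2;
\item self-inclusion (line~\ref{line:con4}) holds by line~\ref{line:betray_itself}.
\end{itemize}
By reliable delivery after GST, all honest players receive the ledger and add $\beta$ to $\ms{pofs}[r].\mathit{betrayers}$.

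\textbf{Main obstacle.} The hardest part is Step~1. We must argue that the betrayer actually ends up with $t_0+1$ distinct provably guilty identities, and that its PoF is included in honest players' $\ms{pof-batch}$ rather than arriving too late. Honest players do not re-sign PoFs outside the batch. We would handle this using the quorum-intersection count above. We would also use the fact that the commit phase waits for $n-t_0$ encrypted messages and re-broadcasts the lists (line~\ref{line:bcast2}), so the PoF of at least one notifying betrayer is guaranteed to appear, as argued via Theorem~\ref{Theorem:the2}.
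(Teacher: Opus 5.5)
Your Steps 2 and 3 are fine and essentially match the paper. Step 1, however, carries the whole content of the lemma, and it is asserted rather than proved. The claim that ``after GST, every honest player eventually decrypts [the PoF] and places it in $\ms{pof-batch}$'' does not follow from the protocol. The two collection loops after line~\ref{line:bcast} and line~\ref{line:bcast2} exit as soon as $n-t_0$ messages have been delivered, possibly before GST and under an adversarial schedule. Only those first $n-t_0$ deliveries enter the lists that get decrypted. A PoF arriving later is only rebroadcast unsigned (line~\ref{line:no_second_signature_bcast}) and earns no second signature, so ``eventually'' buys nothing once that window has closed.

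Theorem~\ref{Theorem:the2} does not fill this gap, for two reasons:
\begin{itemize}
\item It concerns betrayers' messages reaching honest players before the attack's branches terminate. It says nothing about landing within the first $n-t_0$ commit-phase deliveries.
\item Its pigeonhole step only guarantees that each honest player receives \emph{some} betrayer's message. A betrayer becomes valid only with $t_0+1$ second signatures on its \emph{own} $\ms{pof}_i$, and PoFs of different betrayers differ (cf.\ line~\ref{line:con4}). Treating all betrayers' PoFs as one message $m'$ conflates them.
\end{itemize}
The count $|Q_1\cap Q_2|\ge t_0+1$ also does not help. It is about what the PoF contains (line~\ref{line:con1}), not how it propagates.

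The missing idea is to aim only for what Step 2 needs. You need $t_0+1$ honest players holding one common $\ms{pof}_i$ among their first $n-t_0$ deliveries, not all honest players; the latter is something the two rounds cannot guarantee for any given betrayer. The paper obtains this by contradiction, with a counting argument that uses both $n-t_0$ thresholds:
\begin{itemize}
\item Because of the relay broadcast at line~\ref{line:bcast2}, an honest player learns $\ms{pof}_i$ as soon as it has a one-intermediary path from the betrayer.
\item Suppose only $h_1,\dots,h_{t_0}$ learn the betrayers' PoFs. Then the other $h-t_0$ honest players must fill their quota while avoiding $\{h_1,\dots,h_{t_0}\}$ and the $k_x$ betrayers.
\item This forces $(h-t_0)+(k-k_x)+t\ge n-t_0$, hence $2t_0+t\ge n$, a contradiction.
\end{itemize}
With that in hand, your Steps 2 and 3 go through.

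If you write this count out, make sure it covers every pre-GST schedule. A minimal betraying coalition has at most about $n/6+1$ members, which for large $n$ is below $t_0$. So quota sizes alone do not force any betrayer into any honest quota. The paper's count fixes the informed honest set at exactly $t_0$ and sets $k_x=n-t-2t_0$; that choice needs a justification rather than an assumption.
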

\begin{proof}
According to the condition of a valid betrayer from line~\ref{line:con5} to line~\ref{line:con4} in Algorithm~\ref{alg:P} (which is satisfied by line~\ref{line:betray_step_1} in Algorithm~\ref{alg:P_betray}), each betrayer collects $t_0+1$ conflicting messages, including its own conflicting message, which happens since for a successful $\zeta$-uple spending attack, attackers broadcast at least $t_0+1$ conflicting messages in the network in the pre-decision phase. After revealing the $\ms{key}$ at line~\ref{line:reveal} and decrypting $\ms{list-2-enc-msgs}$, upon subsequently noticing a valid $\ms{pof}$ in the list of delivered messages, each honest player will sign the $PoF$ again and broadcast $\ms{double-signed-pof}$, as specified at lines~\ref{line:second_signature} and~\ref{line:second_signature_bcast}. The betrayer $b_i$ should collect at least $t_0+1$ second signatures of its own $\ms{pof_i}$ as $\ms{pof-ledger_i}$ and broadcast it to identify as a valid betrayer and win the reward (explained in Algorithm~\ref{alg:P_betray}, line~\ref{line:betray_step_2}). To be sure that there exists at least one betrayer $b_i$ who is recognized as valid, there should be $t_0+1$ second signatures of $\ms{pof_i}$, which means that at least $t_0+1$ honest players should observe $\ms{pof_i}$ among the first $n-t_0$ messages received during the broadcasts at lines~\ref{line:bcast} and~\ref{line:bcast2}. We refer to this condition as observing $\ms{pof_i}$ at the right time, since only then will $\ms{pof_i}$ be saved in $\ms{list-2-enc-msgs}$ and eventually allow $b_i$ to be identified as a valid betrayer in the blockchain. The reason we insist on observing by honest players is that the owners of these second signatures could not be in the attacker's $\ms{pof_i}$ (line~\ref{line:con3}). Moreover, it is almost impossible for a Byzantine player to sign $\ms{pof_i}$ and broadcast it, which helps identify the Byzantine coalition as the attacker. Additionally, other rational players will not sign and broadcast $\ms{pof_i}$ because if there are more valid betrayers, each rational player will receive a smaller share of the reward, and their profit will decline. 

Now, we need to prove that one common $\ms{pof}$ is inside $\ms{list-2-enc-msgs}$ of at least $t_0+1$ honest players. The proof proceeds by contradiction and the pigeonhole principle. Suppose, for the sake of contradiction, that the statement is false. We then construct the following counterexample. Consider the scenario in which $k_x$ rational players betray and are seen by at most $t_0$ honest players; i.e., $k_x$ only have connections with $\{h_1,\dots,h_{t_0}\}$. In this proof, having a \textit{connection} means the message was received among the first $n-t_0$ total messages by the receiver. Moreover, there is no connection that involves exactly one intermediary between $k_x$ and $h_x \notin \{h_1,\dots,h_{t_0}\}$; because of the second broadcast at line~\ref{line:bcast2}, if there exists one intermediary connection between $k_x$ and $h_x$, $h_x$ will be aware of the corresponding $\ms{pof}$ before revealing keys:
\begin{equation*}
    |\{h_1,\dots,h_{t_0}\}|+t+k_x=t_0+t+k_x=n-t_0 \Rightarrow k_x=n-t-2t_0.
\end{equation*}
Excluding these players, we still have $h-t_0$ honest players who cannot have a connection with $\{h_1,\dots,h_{t_0}\}$ and $k_x$ rational players; otherwise, these $h-t_0$ honest players will be aware of $k_x$, which contradicts the initial statement that at most $t_0$ honest players will see a common $\ms{pof}$. It is worth mentioning that we assume Byzantine players behave arbitrarily, but rational players persist in broadcasting all encrypted messages even if they are not betraying, which means they reveal encrypted $\ms{pof}$s. As a result:
\begin{equation*}
    (h-t_0) + (k-k_x)+t \geq n-t_0 \Rightarrow 2t_0 +t\geq n  \Rightarrow \bot.
\end{equation*}
By the pigeonhole principle, at least $t_0+1$ honest players will see a common $\ms{pof}$, which means there exists at least one valid betrayer eventually.

It is worth mentioning that according to Theorem~\ref{Theorem:the2}, if the number of betrayers is lower than $\lfloor \frac{k+t}{2}-\frac{n}{6} \rfloor +1$, it may happen that their $\ms{pof}$s are not seen by enough honest players at the right time. \qed 
\end{proof}

For the next lemma, we refer to the set of honest players as $\Pi_h$.

\begin{lemma}\label{lem:identify_attackers_betrayers}
Consider we are in the presence of a betraying coalition. All honest players will agree eventually on the same list of valid betrayers and all attackers.
More precisely, $\forall p_i,p_j \in \Pi_h,$ starting with lists of valid betrayers  $\mathcal{S}_i^{\mathcal{Z}_0}$, $\mathcal{S}_j^{\mathcal{Z}_0}$ and lists of detectable attackers $\mathcal{W}_i^{\mathcal{Z}_0}$, $\mathcal{W}_j^{\mathcal{Z}_0}$ at block index $\mathcal{Z}_0$, eventually $\mathcal{S}_i^{\mathcal{Z}_q} = \mathcal{S}_j^{\mathcal{Z}_q}$ for all $q \ge m$ and $\mathcal{W}_i^{\mathcal{Z}_{q'}} = \mathcal{W}_j^{\mathcal{Z}_{q'}}=\mathcal{V}$ for all $q' \ge m'$.
\end{lemma}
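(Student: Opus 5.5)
The plan is to prove the two convergence claims separately, attackers first and betrayers second, and in both cases to argue about the \emph{sets} each honest player accumulates rather than the order in which messages arrive. The two ingredients are the transferable authentication property of the pre-decision consensus and the reliable re-broadcast behaviour of Algorithm~\ref{alg:P}.

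\textbf{Attackers.} Fix a block index and define $\mathcal{V}$ as the union of the $\mathit{attackers}$ fields of all valid $\ms{pof}$s produced by betrayers (Algorithm~\ref{alg:P_betray}) for that index. By Lemma~\ref{lem:1betrayer}, at least one valid $\ms{pof}$ exists and is observed by at least $t_0+1$ honest players. Each honest player that receives a new valid $\ms{pof}$ re-broadcasts it, either double-signed (line~\ref{line:second_signature_bcast}) or as is (line~\ref{line:no_second_signature_bcast}). Since the adversary controls the scheduler only before GST, and after GST messages are delivered in the next local round, every valid $\ms{pof}$ received by one honest player is received by every honest player within finitely many rounds after GST. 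Because of transferable authentication, validity is checked identically by everyone. The update at line~\ref{line:PoF} is a set union, so it is commutative and idempotent. Hence there is an index $\mathcal{Z}_{m'}$ after which $\mathcal{W}_i=\mathcal{W}_j=\mathcal{V}$ for all honest $p_i,p_j$. Slashing at line~\ref{line:CoC} is a deterministic function of this set, so the slashed amounts agree as well.

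\textbf{Betrayers.} The same argument applies to $\ms{pof-ledger}$s. A betrayer $b$ is recorded by honest $p_i$ exactly when $p_i$ receives a $\ms{pof-ledger}$ that satisfies the predicate on lines~\ref{line:con5}--\ref{line:con4}. This predicate depends only on the content of the ledger and on verifiable signatures, so it evaluates identically at all honest players. Next, I need that any $\ms{pof-ledger}$ accepted by one honest player eventually reaches all of them. The betrayer broadcasts it, but a rational betrayer might try to send it only to a subset. To close this gap, I will argue that honest players relay accepted ledgers, treating a ledger like a $\ms{pof}$. Alternatively, I can note that the betrayer's utility is maximised by being recognised by all honest players, since the reward in Flowchart~3 is applied by each player's local ledger. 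After GST, delivery is then guaranteed. The set $\ms{pofs}[r].\mathit{betrayers}$ is again built by union, so once all accepted ledgers are delivered, $\mathcal{S}_i=\mathcal{S}_j$ from some index $\mathcal{Z}_m$ on.

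\textbf{Main obstacle.} The hard part is the order dependence of the reward bookkeeping. The reward updates at lines~\ref{line:BoR2} and~\ref{line:BoR} depend on $sz$ at arrival time, so I must check that the final per-betrayer balances are order-independent, not only the sets. My approach is to show by induction on the number of betrayers that, after the redistribution step, every betrayer holds $\ms{gwallet}/|\mathit{betrayers}|$ regardless of arrival order. This equal split is what the paper intends. If the literal formulas do not yield it, I will restrict the claim to convergence of the sets $\mathcal{S}$ and $\mathcal{W}$, which is all the statement of the lemma requires.
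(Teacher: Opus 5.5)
Your agreement argument is sound and is a tidier version of what the paper does. Validity of a $\ms{pof}$ is checked identically by everyone thanks to transferable authentication. Every honest player relays each new valid $\ms{pof}$ (line~\ref{line:second_signature_bcast} or line~\ref{line:no_second_signature_bcast}), and delivery is guaranteed after GST. The update at line~\ref{line:PoF} is a set union, so all honest players end with the same set. The gap is \emph{completeness}. The lemma says honest players converge on \emph{all} attackers, and that is exactly what later results draw from it: Detection in Theorem~\ref{Theorem:the1}, and $\ms{CoC}=s$ for every attacker in Lemma~\ref{lem:lemma4}. You define $\mathcal{V}$ as the union of the $\mathit{attackers}$ fields of the $\ms{pof}$s produced by Algorithm~\ref{alg:P_betray}, which turns $\mathcal{W}_i=\mathcal{V}$ into a pure agreement statement. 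That union need not contain every attacker. The loop in Algorithm~\ref{alg:P_betray} stops at $t_0+1$ names, and betrayers do not wait for more because their $\ms{pof}$ must be among the first $n-t_0$ deliveries. Since there may be up to $t+k=n-t_0-1$ attackers, different betrayers may name the same $t_0+1$ of them and leave the rest untouched.

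What is missing is an argument that every equivocator eventually appears in some valid $\ms{pof}$ that reaches an honest player. The paper supplies this with an incentive step on top of Lemma~\ref{lem:1betrayer}. A valid betrayer is paid out of $\ms{gwallet}$, i.e.\ out of slashed attacker stake. So after the commit-reveal window it profits from assembling the signed conflicting messages of each remaining attacker into further $\ms{pof}$s and broadcasting them. Honest players slash on receipt and relay these unsigned at line~\ref{line:no_second_signature_bcast}, and from there your propagation argument applies.

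Two smaller remarks. First, having honest players relay accepted $\ms{pof-ledger}$s would change the protocol, since the handler at line~\ref{line:PoFledger} does not rebroadcast. Rely instead on the betrayer's own broadcast in Algorithm~\ref{alg:P_betray} and delivery after GST, as the paper does. Second, restricting to sets is the right call, since the lemma says nothing about balances. Writing $G$ for $\ms{gwallet}[\mathit{after}][r']$, the literal update at line~\ref{line:BoR2} credits the newcomer $G/sz$ while debiting the others only $G/(sz+1)$ in total. So no equal split would come out of your induction.
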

\begin{proof}
By examination of Algorithm~\ref{alg:P}, we prove that at least $t_0+1$ attackers will eventually be identified. This is the process of betrayal: the betrayer $b_i$ collects at least $t_0+1$ conflicting messages as $\ms{pof_i}$ (returned by $\lit{byzantine-agreement}$ at line~\ref{line:abc-inv}, explained at line~\ref{line:betray_step_1} in Algorithm~\ref{alg:P_betray}). This is a condition for being recognized as a valid betrayer at line~\ref{line:con1}, and each betrayer can eventually collect $t_0+1$ conflicting messages, since a successful attack requires at least $t_0+1$ conflicting messages. $\ms{pof_i}$ must have proof against the owner as well, according to line~\ref{line:con4} (and line~\ref{line:betray_itself} in Algorithm~\ref{alg:P_betray}), since it stops the owner of rational players from censoring $\ms{pof}$s that contain their ID as the attackers to prevent slashing. As soon as the betrayer $b_i$ collects $t_0+1$ conflicting messages, it will encrypt (line~\ref{line:encrypt}) and broadcast (line~\ref{line:bcast}) them as $\ms{pof_i}$. Even though the betrayer $b_i$ can collect more than $t_0+1$ conflicting messages to yield more rewards, it will not. Because to be identified as a valid betrayer, $\ms{pof_i}$ should be delivered to honest players through the first $n-t_0$ delivered messages to be saved in their $\ms{list-2-enc-msgs}$; so the betrayer $b_i$ is unwilling to take the risk of waiting any longer. It is worth mentioning that the betrayer $b_i$ rushing to broadcast $\ms{pof_i}$ does not mean it will send $\ms{pof_i}$ without encryption. We prove this by contradiction. Sending $\ms{pof_i}$ without encryption will prematurely notify Byzantine players that betrayal has happened, so Byzantine players will start sending the $\ms{pof}$ as well to prevent being slashed or prevent $\ms{pof_i}$ from being propagated. In this situation, the betrayer $b_i$ has an expected utility of zero, so premature sending yields nothing, and $b_i$ would never choose to act that way.

So far, we have proven that at least $t_0+1$ attackers will eventually be identified and punished by all honest players. But what if there are more attackers? Rational players who are detected as valid betrayers (according to Lemma~\ref{lem:1betrayer}, we know that at least one valid betrayer exists) will collect this extra $\ms{pof}$ and will broadcast it to increase their reward (the source of betrayers' reward is the punishment of attackers). Eventually, all honest players will receive this extra $\ms{pof}$ and punish newly identified attackers at line~\ref{line:PoF}. As soon as the $\ms{gwallet}$ is non-empty at line~\ref{line:GlobalWallet}, these extra coins will be distributed among all valid betrayers for the corresponding block index. As a result, it is profitable for rational players to try to gather complete information about attackers over time. Consequently, all honest players will eventually be aware of all attackers, and they will have the same list of attackers. It is worth mentioning that for the received $\ms{pof}$ that is valid but is not inside $\ms{list-2-enc-msgs}$, players will punish attackers and broadcast unsigned $\ms{pof}$ at line~\ref{line:no_second_signature_bcast}. As a result, eventually all players will be aware of these attackers.

Until now we proved that all honest players will eventually detect all attackers. We now demonstrate that all honest players observe an identical list of valid betrayers. At line~\ref{line:PoF}, the moment a player receives a $\ms{pof}$, it will broadcast this $\ms{pof}$ to ensure every honest player on the network observes it. If that $\ms{pof}$ is delivered inside the first and second broadcasts of the commit-reveal phase (lines~\ref{line:bcast} and~\ref{line:bcast2} and be saved in $\ms{list-2-enc-msgs}$), the player should sign the $\ms{pof}$ as a $\ms{double-signed-pof}_i$ at line~\ref{line:second_signature} before broadcasting. This second signature stops each Byzantine player from pretending to be a betrayer when it learns of the betrayal. A betrayer $b_i$ should collect at least $t_0+1$ of these $\ms{double-signed-pof}_i$ of its own $\ms{pof_i}$ (a condition of being a valid betrayer at line~\ref{line:con2}, which is built according to line~\ref{line:second_signature} in Algorithm~\ref{alg:P_betray}) as the $\ms{pof-ledger}_i$. The second signatories should not be inside the relevant attackers' $\ms{pof}_i$, as shown at line~\ref{line:con3}. According to Lemma~\ref{lem:1betrayer}, all honest players will recognize at least one valid betrayer $b_i$ and sign its $\ms{pof_i}$; as a result, collecting $t_0+1$ amount of $\ms{double-signed-pof}_i$s will happen. The reason for the $t_0+1$ amount of $\ms{double-signed-pof}$ is that some silent Byzantine players may exist in the blockchain, and if a betrayal occurs, these silent Byzantine players begin double-signing the late $\ms{pof}$s of other Byzantine players to prevent punishment. Due to the $t<t_0+1$ threshold, at least one of these $\ms{double-signed-pof}$s should be signed by an honest player, which is impossible because an honest player would not sign a $\ms{pof}$ that is published outside the eligible period. Due to a partial synchrony network, all honest players will eventually receive the entire eligible $\ms{pof-ledger}$ at line~\ref{line:PoFledger}, which means they observe the same list of valid betrayers. \qed 
\end{proof}

\begin{definition}
Algorithm~\ref{alg:P} \emph{terminates} if all honest and rational players have reached line~\ref{line:end}.
\end{definition}

\begin{lemma}\label{lem:termination}
Algorithm~\ref{alg:P} will terminate eventually. 
\end{lemma}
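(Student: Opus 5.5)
To show that Algorithm~\ref{alg:P} terminates, I will walk through \textsc{append-block} and show that every blocking point is eventually passed by every honest and rational player. The blocking points are the call to \texttt{byzantine-agreement} at line~\ref{line:abc-inv} and the three waiting loops inside \textsc{commit-reveal}. Everything else is local computation or the event handlers of Flowcharts 2--4. Those handlers run in parallel with the main function and never block it, so they can be ignored here.

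\textbf{Pre-decision phase.} The consensus plugged in at line~\ref{line:abc-inv} is assumed to satisfy liveness whenever fewer than $2/3$ of the players are attackers. Our model requires $t+k<n-t_0$, so at least $t_0+1>n/3$ players are honest, and the number of possible attackers is below $2n/3$. Hence after GST the invocation returns at every honest player, with output either a block or a list of PoFs. Rational players also return. A rational player that stalls here forfeits both the attack profit and the betrayal reward: by Lemma~\ref{lem:identify_attackers_betrayers} it is eventually detected and slashed anyway. So stalling is weakly dominated, and I will argue that it is never chosen.

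\textbf{Commit-reveal phase.} Each honest player broadcasts at line~\ref{line:bcast}. There are at least $n-t_0$ non-Byzantine players (honest plus rational), and rational players keep broadcasting their encrypted messages, as argued in the proof of Lemma~\ref{lem:1betrayer}. Once GST has passed, message delays are bounded, so the first loop (line~\ref{line:while}) eventually collects $n-t_0$ messages. The second broadcast (line~\ref{line:bcast2}) and its loop follow by the same argument. For the reveal loop, I will read ``all required keys'' as the keys of the $n-t_0$ senders that appear in $\ms{list-enc-msgs}$. Honest senders always reveal. A rational sender that withholds its key gains nothing and risks losing its reward or being counted as an attacker. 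A Byzantine sender that withholds its key has its entry treated as missing after the bounded post-GST delay, and I will need to state that default explicitly. Decryption then yields either a block, which is returned at line~\ref{line:single-block}, or a nonempty \ms{pof-batch}. In the second case \texttt{select-block} picks one block deterministically, which is a finite local computation, so the function returns at line~\ref{line:end}.

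\textbf{Main obstacle.} The delicate step is the reveal loop. Byzantine players whose encrypted messages were counted among the first $n-t_0$ could refuse to reveal, and the paper's pseudocode leaves ``all required keys'' vague. My first approach will be to define ``required'' as the keys of honest and rational senders, plus a timeout after GST for the rest. I will then justify that the outcome is unaffected: the $t_0+1$ honest messages already determine the pre-decision, and any PoF comes from a betrayer, who is rational and therefore reveals. With that settled, every honest and rational player reaches line~\ref{line:end}, which is exactly the definition of termination.
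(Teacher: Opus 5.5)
Your skeleton is the paper's proof. The pre-decision call returns by the liveness assumption on the plugged-in consensus. Each commit-reveal wait is for $n-t_0$ messages and is eventually met, because at most $t\le t_0$ players are Byzantine while honest and rational players keep broadcasting. The handlers for PoFs, PoF-ledgers and the global wallet run concurrently without blocking.

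\textbf{Rational players in the pre-decision phase.} The justification you add here does not work. A rational player that merely stops sending produces no conflicting signed messages, so no PoF can name it. Lemma~\ref{lem:identify_attackers_betrayers} therefore gives no detection or slashing for silence; it also only applies in the presence of a betraying coalition. Worse, that lemma and Lemma~\ref{lem:1betrayer} are proved by having players reveal keys, decrypt $\ms{list-2-enc-msgs}$ and double-sign PoFs. That is, they presuppose that the commit-reveal waits complete, which is what you are trying to establish, so invoking them here is circular. The paper does not derive rational participation from incentives at all. It assumes that rational players keep broadcasting all messages even when not betraying. You already use that assumption in your commit-reveal paragraph; use it for the pre-decision phase as well. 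A minor point: $t_0+1=\lceil n/3\rceil$ equals $n/3$ when $3\mid n$, so ``$t_0+1>n/3$'' and ``fewer than $2n/3$ attackers'' only hold when $3\nmid n$. The paper sidesteps this by appealing directly to the consensus's termination.

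\textbf{The reveal loop.} You are right that a literal reading (keys for every entry of $\ms{list-enc-msgs}$) could block on silent Byzantine senders. Your timeout fix does not fit the model, though: GST is not observable and the delay bound is unknown, so no player can tell when ``the bounded post-GST delay'' has elapsed. Adding a timer also modifies Algorithm~\ref{alg:P} rather than proving something about it. The paper's ``a similar argument holds'' reads the reveal wait like the two preceding ones, as waiting for keys from $n-t_0$ players. Since at least $n-t\ge n-t_0$ honest or rational players reveal, it terminates by the same count, and entries from non-revealing Byzantine senders are simply left undecrypted. With that reading you also no longer need the claim that $t_0+1$ honest messages fix the pre-decision. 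That is an agreement argument, irrelevant to termination, and it is not even available: a single player's list of $n-t_0$ entries is only guaranteed $n-t_0-(t+k)\ge 1$ honest ones.
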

\begin{proof}
The consensus algorithm executed during the pre-decision phase intrinsically guarantees termination. Therefore, a player cannot get stuck at line~\ref{line:abc-inv}. Moreover, it cannot get stuck in the commit-reveal function at line~\ref{line:commit-reveal-inv}. We assume Byzantine players behave arbitrarily, but rational players persist in broadcasting all messages even if they are not betraying. Since the number of Byzantine players is less than $t_0+1$, a player eventually receives $n-t_0$ messages from the broadcasts at lines~\ref{line:bcast} and ~\ref{line:bcast2}; consequently, the while condition will eventually be satisfied. A similar argument holds for the key broadcast at line~\ref{line:reveal}. As a result, the append-block function at line~\ref{line:append-block} will eventually terminate. The event-based lines~\ref{line:PoF},~\ref{line:PoFledger} and~\ref{line:GlobalWallet} execute concurrently with the append-block function and do not disrupt its execution. If a new message meets the event-based lines' conditions, the player assigns tags to specific players and changes the balance of the accounts, so the player terminates immediately. For the reasons above, we conclude that players continue executing increasing block indexes; i.e., no process gets stuck at a particular block index. \qed 
\end{proof}

\begin{theorem}[Theorem~\ref{Theorem:the1}]
\protocol solves the Fully Accountable Agreement problem for each new index of the blockchain.
\end{theorem}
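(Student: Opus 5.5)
The plan is to check the five properties of Definition~\ref{def:FAA} one at a time, for an arbitrary block index. The argument is per index; the multi-index claim then follows because Lemma~\ref{lem:termination} guarantees that honest players do not get stuck at any index. Every property will be reduced to either the properties assumed of the pre-decision consensus (safety when fewer than $1/3$ of the players are attackers, liveness when fewer than $2/3$ are, and transferable authentication) or to Lemmas~\ref{lem:1betrayer}, \ref{lem:identify_attackers_betrayers} and~\ref{lem:termination}. We split into two cases according to whether an attack attempt occurs at that index.

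\textbf{Case 1: no attack attempt.} The invocation at line~\ref{line:abc-inv} returns the same $\ms{pre-decision}$ block $b$ to every honest player, by agreement and validity of the underlying consensus. No valid $\ms{pof}$ can exist, since transferable authentication only produces proofs from genuinely conflicting signed messages. Hence $\ms{pof-batch}=\varnothing$ and every honest player returns $b$ at line~\ref{line:single-block}. This gives Agreement and Validity. Detection and Convergence hold trivially: the set of attackers and the list of betrayers are both empty.

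\textbf{Case 2: an attack attempt occurs.} By the definition of an attack attempt, the conflicting messages $m_1,m_2$ are signed by the $\ge t_0+1$ players in $Q_1\cap Q_2\subseteq\mathcal{A}$, so proofs-of-fraud exist. We assume a betraying coalition exists, which is justified by the threshold in Theorem~\ref{Theorem:the2}. Under that assumption:
\begin{itemize}
\item Lemma~\ref{lem:1betrayer} yields at least one valid betrayer.
\item Lemma~\ref{lem:identify_attackers_betrayers} yields that all honest players eventually hold the same attacker list $\mathcal{V}$ and the same betrayer list. This is Detection and Convergence.
\end{itemize}

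For Agreement, the key step is to show that every honest player ends up with the same $\ms{pof-batch}$ before calling $\lit{select-block}$ at line~\ref{line:resolve}. Since $\lit{select-block}$ is deterministic, identical inputs then produce the same block. The argument is as follows. The betrayer's encrypted $\ms{pof}$ is committed before keys are revealed. The double broadcast at lines~\ref{line:bcast} and~\ref{line:bcast2}, combined with the quorum-intersection counting from the proof of Lemma~\ref{lem:1betrayer}, ensures that some common $\ms{pof}$ appears in the decrypted lists of all honest players. Any further $\ms{pof}$ is rebroadcast at line~\ref{line:no_second_signature_bcast}.

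For Validity, each conflicting branch block originated as a proposal accepted by a quorum of size $n-t_0$ containing honest players. The fork-resolution rule should therefore only ever pick among these branch blocks, which are honestly proposed.

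Termination follows directly from Lemma~\ref{lem:termination}.

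\textbf{Main obstacle.} The hardest step is Agreement in Case 2. Two honest players could decrypt different subsets of $\ms{pof}$s, because each list is formed from its own first $n-t_0$ deliveries. The first thing I would try is to define $\lit{select-block}$ as a function of the set of conflicting \emph{blocks} evidenced by the $\ms{pof}$s, for example choosing the block with the smallest hash among those witnessed. I would then argue, via Lemma~\ref{lem:identify_attackers_betrayers} and eventual delivery under partial synchrony, that all honest players eventually see the same set of conflicting blocks. Any temporary discrepancy would be resolved by re-running the deterministic choice once the $\ms{pof}$ sets converge, yielding eventual agreement on one decision block per index.
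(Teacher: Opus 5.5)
Your decomposition is the paper's own: Detection and Convergence from Lemma~\ref{lem:identify_attackers_betrayers}, Termination from Lemma~\ref{lem:termination}, Agreement from deterministic fork resolution at line~\ref{line:resolve}, and Validity from where the branch blocks come from. The gap is in Agreement. Your ``main obstacle'' paragraph locates it correctly but does not close it for \protocol as written.

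Lemma~\ref{lem:1betrayer} only gives $t_0+1$ honest players that see a common pof in time, not all honest players. So your claim that a common pof appears in every honest player's decrypted list is too strong. An honest player outside that set may decrypt an empty \emph{pof-batch} and return its own pre-decision block at line~\ref{line:single-block}. That block may lie on a different branch from the one select-block picks elsewhere. Re-running the deterministic choice once pof sets converge is a modification of Algorithm~\ref{alg:P}, whose commit-reveal returns once, from a snapshot.

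To be fair, the paper's proof does not close this either. It simply asserts agreement ``by line~\ref{line:resolve}'', implicitly relying on the remark in Section~\ref{sec:solution} that all PoFs eventually reach all honest players, who then resolve forks deterministically. Your eventual-agreement reading is therefore the intended one. However, it must be stated as an assumption about how decisions are finalised (revisable until PoFs converge), not derived from the pseudocode.

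Three further steps need repair.
\begin{enumerate}
\item Theorem~\ref{Theorem:the2} does not justify the existence of a betraying coalition; it only says what happens if that many rational players betray. Existence is a hypothesis of Lemmas~\ref{lem:1betrayer} and~\ref{lem:identify_attackers_betrayers}, which the paper's proof inherits silently. Your Case~2 therefore proves a conditional statement and should say so.
\item Your Validity step is a non sequitur. Honest players also accept blocks proposed by attackers, so acceptance by a quorum containing honest players does not make a block honestly proposed. The paper instead argues that the output is some honest player's pre-decision block, i.e.\ it leans on the validity of the pre-decision consensus. That validity is itself only guaranteed with fewer than $n/3$ attackers.
\item In Case~1, the claim that no valid pof exists is too strong. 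An equivocation that fails to produce two conflicting quorums still yields a valid pof under transferable authentication. Moreover, Algorithm~\ref{alg:P_betray} builds pofs whenever a coalition forms, whether or not the attack succeeds. So \emph{pof-batch} and the attacker set need not be empty, and with your smallest-hash select-block such a player could even decide a block other than $b$. The paper never treats this case separately.
\end{enumerate}
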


\begin{proof}
By Definition~\ref{def:FAA}, if \protocol satisfies properties detection, convergence, agreement, validity and termination then \protocol solves the Fully Accountable Agreement problem. Lemma~\ref{lem:identify_attackers_betrayers} shows that properties detection and convergence are satisfied by \protocol and Lemma~\ref{lem:termination} shows that property termination is satisfied by \protocol. Moreover, the \protocol protocol outputs agreement on a single block by line~\ref{line:resolve}, which satisfies the agreement property, and this output block is selected from the pre-decision block of some honest players, which satisfies the validity property. Thus, \protocol solves the Fully Accountable Agreement problem. \qed 
\end{proof}

\section{Proof of Lemma~\ref{lem:lemma5}}\label{app:lemma5}
\begin{lemma}[Lemma~\ref{lem:lemma5}]
Suppose attackers are able to launch up to $m$ isolation attacks, each of which succeeds independently with probability $p$. The probability of a successfully $\zeta$-uple spending attack, where $\zeta\geq 3$, is upper bounded by $\varepsilon = \left(e/ \lambda\right)^3$.
\end{lemma}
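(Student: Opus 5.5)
The plan is to model the number of successful isolation attacks as a binomial random variable and apply the multiplicative Chernoff upper-tail bound. Let $X=\sum_{j=1}^{m} X_j$, where $X_j\in\{0,1\}$ indicates that the $j$-th isolation attempt succeeds. By hypothesis the $X_j$ are independent Bernoulli$(p)$ variables, so $\mu=\mathbb{E}[X]=mp$. Following the disk picture of Figure~\ref{fig:fig5}, creating $\zeta$ branches requires cutting $\zeta$ radii, that is, $\zeta$ successful isolation attacks. A $\zeta$-uple spending attack with $\zeta\geq 3$ therefore requires the event $X\geq 3$. Since $\Pr[X\geq \zeta]$ is non-increasing in $\zeta$, it suffices to bound $\Pr[X\geq 3]$.

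Next I would write the threshold as $3=(1+\delta)\mu$, which gives $1+\delta=3/(mp)$. This is where I would read the $\lambda$ in the statement: $\lambda=1+\delta=3/(mp)$. Proposition~\ref{prop} gives $mp<3/e$, hence $\lambda>e$ and $\delta>e-1>0$. This places us in the regime where the multiplicative Chernoff bound \cite{motwani1995randomized},
\[
\Pr[X\geq (1+\delta)\mu]\leq \left(\frac{e^{\delta}}{(1+\delta)^{1+\delta}}\right)^{\mu},
\]
is meaningful and below $1$.

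The main step is simplifying this bound to exactly $(e/\lambda)^3$. The exponent is $\mu(1+\delta)=3$, so the bound can be rewritten as $e^{\delta\mu}\cdot\lambda^{-3}$. Using $\delta\mu=3-\mu\leq 3$, I obtain $e^{\delta\mu}\leq e^{3}$. Therefore
\[
\Pr[X\geq 3]\leq \frac{e^{3-\mu}}{\lambda^{3}}\leq \left(\frac{e}{\lambda}\right)^{3}=\varepsilon ,
\]
and $\varepsilon<1$ precisely because $\lambda>e$ by Proposition~\ref{prop}. To conclude, I would use monotonicity in $\zeta$ to extend the bound to every $\zeta\geq 3$. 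If one wants the statement to say that the probability drops exponentially with $\zeta$, the same computation with threshold $\zeta$ in place of $3$ gives $(e\,mp/\zeta)^{\zeta}$.

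I expect the main obstacle to be the modelling step rather than the algebra. One must justify that $\zeta\geq 3$ branches genuinely require at least $3$ independent successful isolation attacks, and not just $\zeta-1$ cuts. One must also fix the meaning of $\lambda$ consistently with the proposition. I would handle this by appealing to the partition argument of Lemma~\ref{lem:lemma1}: each branch needs its own disjoint group of honest players, and each such group is cut off by its own isolation. I would then state explicitly that $\lambda:=3/(mp)$ and that the hypothesis $m\geq 3$ is the conservative allowance mentioned just before the lemma.
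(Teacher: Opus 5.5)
Your proposal is correct and follows essentially the same route as the paper's proof. Like the paper, you apply the multiplicative Chernoff upper-tail bound to the sum of $m$ independent Bernoulli$(p)$ indicators, choose $(1+\delta)\mu=3$ so that $\lambda=3/(mp)$, and simplify to $e^{3-\mu}\lambda^{-3}<(e/\lambda)^3$. Your explicit use of Proposition~\ref{prop} to certify $\delta>0$, which the paper leaves implicit, and your monotonicity remark extending the bound to every $\zeta\geq 3$ are useful refinements of the same argument.
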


\begin{proof}
We consider a network as a disk where honest players are distributed uniformly as depicted in Figure~\ref{fig:fig5} and where a successful isolation attack consists of delaying all messages on all edges crossing one radius of the disk. In order to create $\zeta$-uple spending attack the adversary must launch $\zeta$ successful isolation attacks.\\
Let $\zeta = \sum_{i=1}^{m} X_i$, where $X_i \perp X_j \quad \forall\, i \neq j,\ i, j \in \{1, \ldots, m\}$ and $\mathbb{P}(X_i)=p$ which is equal to the probability of a successful isolation attack. In other words, $X_i$ is the event of success in the $i$-th Bernoulli trial, where each trial succeeds with probability $p$ and the sum of successful trials is equal to $\zeta$. Let $\mu = \mathbb{E}(\zeta) = \sum_{i=1}^{m} p_i = m \cdot p$. Then, according to Chernoff bounds~\cite{motwani1995randomized}, we have:

\begin{equation*}
    \mathbb{P}\left(\zeta \geq (1+\delta)\mu\right) \leq \left( \frac{e^{\delta}}{(1+\delta)^{1+\delta}} \right)^{\mu} \quad \text{for all } \delta > 0.
\end{equation*}
According to Lemma~\ref{lem:lemma1}, we have $2\leq \zeta\leq \left\lfloor\frac{3n-3(f+k)}{2n-3(f+k)}\right\rfloor$, and $m$ is bounded,
which means attackers cannot establish isolation attacks forever.
Consequently, we have:
\begin{gather}
\mathbb{P}(\zeta \geq 3) \leq \left( \frac{e^{\frac{3}{m \cdot p}-1}}{(\frac{3}{m \cdot p})^{\frac{3}{m \cdot p}}} \right)^{m \cdot p}=e^{3-\frac{3}{\lambda}} \cdot \lambda^{-3}<\varepsilon= \left(\frac{e}{\lambda}\right)^3\quad \text{for } \lambda=\frac{3}{m\cdot p}=\frac{3}{\mu}. \nonumber\\
\end{gather}
As a result, $\mathbb{P}(\zeta \geq 3)<\left(e/ \lambda\right)^3$. \qed  
\end{proof}

\section{Proofs of Theorems~\ref{Theorem:the4}--\ref{Theorem:the5}} \label{app:the45}

In this section, we define the Benefit-of-Betraying, Cost-of-Betraying and Cost-of-Cheating to prove that \protocol is a $\epsilon$-$(k,t)$-robust equilibrium. To this end, we make an extra assumption that there is always at least one Byzantine attacker among $n$ players. This assumption allows us to simplify the proof by guaranteeing that we can reward betrayers with the stake of at least one attacker.

\begin{lemma}\label{lem:lemma2}
If we are in the presence of a betraying coalition, then the \textit{Benefit-of-Betraying} $\ms{BoR}$, which is the reward that increases a player's utility function when this player betrays and does not unlock its reward, is at least $\ms{BoR}=\frac{f\cdot s}{k}$, where $f$ is the number of detected Byzantine attackers.
\end{lemma}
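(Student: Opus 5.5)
We trace the flow of slashed stake through Algorithm~\ref{alg:P} and bound from below the share that reaches each valid betrayer. The argument has three steps: a lower bound on the total slashed amount, an upper bound on the number of valid betrayers, and the equal-split rule.

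\textbf{Step 1: total slashed amount.} Since we are in the presence of a betraying coalition, Lemma~\ref{lem:identify_attackers_betrayers} (equivalently, the Detection property of Theorem~\ref{Theorem:the1}) applies. It says every honest player eventually receives valid $\ms{pof}$s naming all detectable attackers. Of these, $f$ are Byzantine, and by the standing assumption of this section $f\geq 1$.

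Upon each valid $\ms{pof}$, line~\ref{line:StoGlobalWallet} adds the attacker's stake to $\ms{gwallet}[before][r]$, and line~\ref{line:CoC} zeroes it. So the cumulative amount entering the global wallet for index $r$ is at least $f\cdot s$. This counts only the Byzantine attackers. Any colluding rational players who are also caught only add to the pool, so ignoring them keeps the bound valid.

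\textbf{Step 2: number of valid betrayers.} Lemma~\ref{lem:1betrayer} guarantees at least one valid betrayer exists, so the handler at line~\ref{line:GlobalWallet} actually distributes the wallet rather than holding it. Every valid betrayer must be a rational player, for two reasons:
\begin{itemize}
\item A Byzantine player cannot obtain $t_0+1$ honest second signatures on a late $\ms{pof}$; this is the argument around line~\ref{line:con3} in Lemma~\ref{lem:identify_attackers_betrayers}.
\item Honest players never appear as self-revealing attackers (line~\ref{line:con4}).
\end{itemize}
Hence $|\ms{pofs}[r].\mathit{betrayers}|\leq k$.

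\textbf{Step 3: equal split.} Line~\ref{line:BoR} gives each current betrayer $\ms{gwallet}[before][r]/|\ms{pofs}[r].\mathit{betrayers}|$. The handler at line~\ref{line:PoFledger} rebalances when a new betrayer arrives: it deducts $\frac{\ms{gwallet}[after][r']}{(sz+1)sz}$ from each of the $sz$ previous betrayers and credits the newcomer. We will check that this keeps every valid betrayer's accumulated reward equal to $\ms{gwallet}[after][r]/|\ms{pofs}[r].\mathit{betrayers}|$. This is the invariant that the total is preserved and the shares are equal.

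Combining the three steps, each betrayer receives at least $f s / k$. The reward is credited to liquidity $\ell$, matching the phrase ``does not unlock its reward'' in the definition of $\ms{BoR}$.

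\textbf{Main obstacle.} The hardest step is verifying the redistribution invariant in Step 3 under arbitrary interleaving, in partial synchrony, of the three event handlers (new $\ms{pof}$, new $\ms{pof-ledger}$, non-empty wallet). In particular we must confirm that late $\ms{pof}$s and late ledgers commute, so that the final per-betrayer share depends only on the final total and the final count. The plan is an induction on the number of handler invocations for index $r$, maintaining that the sum of all rewards credited equals $\ms{gwallet}[after][r]$ and that all individual shares are equal. Should the literal normalization in line~\ref{line:BoR2} not preserve the sum exactly, the fallback is to argue only about the eventual state. By Convergence, the betrayer list stabilizes at some $k'\leq k$. Once no further $\ms{pof}$s arrive, the wallet has been fully paid out, and each betrayer holds at least the equal share $fs/k'\geq fs/k$.
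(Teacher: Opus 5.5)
Your route is the paper's: slashed stake enters the global wallet and is split equally, with all $k$ rational players betraying as the worst case. However, Step~1 discards the very term the lemma needs. Every betrayer lists itself among the attackers (line~\ref{line:betray_itself} of Algorithm~\ref{alg:P_betray}, enforced by line~\ref{line:con4}), so line~\ref{line:CoC} zeroes its own stake $s$. Lemma~\ref{lem:lemma3} gets $\ms{CoR}=0$ only because that stake comes back through line~\ref{line:BoR}, i.e.\ out of the same pool you are bounding. A total credit of at least $fs/k$ is therefore compatible with a net change of $fs/k-s$. It cannot be used as $\ms{BoR}$ alongside $\ms{CoR}=0$, since Theorem~\ref{Theorem:the4} compares $\ms{BoR}-\ms{CoR}=fs/k$ with $\ms{BoC}-s$. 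Keep the self-revealed stakes instead: with $k'\le k$ validated betrayers the pool is at least $(f+k')s$, so each credit is at least $s+fs/k'$. That is the returned stake plus a benefit of at least $fs/k$, which is what the paper means by ``the stake of the \emph{other} $f$ attackers''.

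You also only bound the credit of \emph{validated} betrayers. Consider a rational player that runs Algorithm~\ref{alg:P_betray} but whose $\ms{pof}$ misses the first $n-t_0$ deliveries at $t_0+1$ honest players. It never obtains a $\ms{pof-ledger}$, yet its self-incriminating $\ms{pof}$ is still processed at line~\ref{line:PoF}, so it is slashed and paid nothing. Since the lemma concerns a player who chooses to betray, you need the paper's closing remark. The pool is unchanged and goes to whoever is validated, so by symmetry each betrayer's \emph{expected} share is still at least the pool divided by $k$.

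Two smaller points. First, your first bullet in Step~2 only excludes Byzantine players who react \emph{late} to a betrayal. It does not exclude a Byzantine coalition member that runs Algorithm~\ref{alg:P_betray} on schedule and passes lines~\ref{line:con5}--\ref{line:con4} like any betrayer. With all $k$ rational players also betraying, $b$ such players split the pool $(f+k)s$ $k+b$ ways, giving a net gain of $(f-b)s/(k+b)<fs/k$. The paper's proof does not exclude this either; it simply takes ``all rational players betray'' as the worst case. So state it as an assumption rather than a derived fact.

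Second, in Step~3 the equal-shares invariant is false as written: line~\ref{line:BoR2} pays the newcomer $G/sz$ rather than $G/(sz+1)$, where $G=\ms{gwallet}[after][r]$. What holds, and suffices, is the inequality that each of the current $c$ betrayers holds at least $G/c$. A payout $\Delta$ at line~\ref{line:BoR} adds $\Delta/c$ to everyone while $G$ grows by $\Delta$. A new ledger lowers each old holding by $G/(c(c+1))$, taking $G/c$ to $G/(c+1)$, and the newcomer receives $G/c\ge G/(c+1)$. Your fallback does not avoid this step; it just asserts its conclusion.
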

\begin{proof}
By examination of Algorithm~\ref{alg:P}, betrayers received the reward at lines~\ref{line:BoR2} and~\ref{line:BoR}. When an attacker is identified and slashed, its stake is transferred into $\ms{gwallet}$ at line~\ref{line:StoGlobalWallet} and is immediately distributed between valid betrayers at line~\ref{line:BoR}. Moreover, the reward is divided among initial valid betrayers, and each time a new valid betrayer joins, a portion is reclaimed from the existing betrayers and redistributed to the new one at line~\ref{line:BoR2}. We should split the reward equally among betrayers; otherwise, because the network is partially synchronous, players may not have the same order of valid betrayers, and the blockchain could experience inconsistencies. The minimum reward occurs when all rational players betray; consequently, the stake of the other $f$ attackers is split among all $k$ betrayers. Because each $f$ attackers staked $s$ initially, we conclude that the $\ms{BoR}$ per betrayer is at least $\frac{f\cdot s}{k}$.\\
It is worth noting that if a rational player betrays during the commit-reveal phase but is not identified as a valid betrayer due to partial synchrony, then the remaining betrayers receive a larger share of the reward. As a result, the overall expected utility remains unchanged. \qed 
\end{proof}

\begin{lemma}\label{lem:lemma3}
If we are in the presence of a betraying coalition, then the \textit{Cost-of-Betraying} $\ms{CoR}$, which is the decrease in a player's utility function when this player betrays, is $CoR=0$. 
\end{lemma}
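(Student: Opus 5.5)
The plan is to compare, for a single rational player $i$ inside a betraying coalition, its utility under two strategies: betraying (Flowchart 5, Algorithm~\ref{alg:P_betray}) and colluding. I will show that no component of $u_i$ decreases because of the betrayal itself. I will split the possible losses into three parts: the stake $s$, the liquidity $\ell$, and any future participation and rewards. Then I will argue each part separately, relying on the protocol code and on Lemmas~\ref{lem:1betrayer} and~\ref{lem:identify_attackers_betrayers}.

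\textbf{Stake.} A betrayer must include itself among the attackers of its $\ms{pof}_i$ (line~\ref{line:betray_itself} of Algorithm~\ref{alg:P_betray}, enforced by line~\ref{line:con4} of Algorithm~\ref{alg:P}). Its stake is therefore slashed and moved to $\ms{gwallet}$ at lines~\ref{line:StoGlobalWallet}--\ref{line:CoC}. However, it is then redistributed among valid betrayers at line~\ref{line:BoR}, and, as stated in Section~\ref{sec:Solution}, returned to the betrayer as liquidity together with the betraying reward. The net effect on the player's total assets $s+\ell$ is therefore zero for this term.

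\textbf{Comparison with colluding.} In a betraying coalition, Theorem~\ref{Theorem:the2} guarantees that all honest players learn of the attack. Lemma~\ref{lem:identify_attackers_betrayers} then guarantees that every attacker is eventually detected, including a player who colluded. Hence the colluder loses its stake $s$ with certainty, and its $\zeta$-uple spending gain is voided because forks are resolved deterministically at line~\ref{line:resolve}. Betraying thus forfeits no gain that colluding would have kept. This is the key step and the main obstacle. I need to argue that the ``opportunity cost'' of betraying is zero, which requires that collusion yields nothing once a betraying coalition exists. I would rely on Theorem~\ref{Theorem:the1}: Agreement and Detection together give a unique final block and full detection, so any spent liquidity on a discarded branch does not persist.

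\textbf{Cryptographic risk.} The commit-reveal encryption (line~\ref{line:encrypt}) hides the betrayal until the reveal phase. Byzantine players therefore cannot retaliate against the betrayer specifically, so no extra penalty is incurred; this follows the argument already given in the proof of Lemma~\ref{lem:identify_attackers_betrayers}. The temporary exclusion from consensus carries no utility loss in the model, since $u_i$ counts only assets and no fees are modeled. Summing the three components gives $\ms{CoR}=0$.
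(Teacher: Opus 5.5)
Your \textbf{Stake} paragraph is the paper's entire proof: the betrayer lists itself among the attackers of its $\ms{pof}$, is slashed at line~\ref{line:CoC}, and recovers the stake as liquidity through the split at line~\ref{line:BoR}. That alone gives $CoR=0$.

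The problem is the paragraph you call the key step. In the paper, $CoR$ is the direct loss incurred by betraying. It is used in $BoR-CoR>BoC-CoC$, whose right-hand side is the payoff of colluding. The attack profit a betrayer forgoes is already counted there as $BoC$, so folding an ``opportunity cost'' into $CoR$ would count it twice.

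The claim you use to make that opportunity cost vanish also fails. You argue that, once a betraying coalition exists, the colluder's $\zeta$-uple gain is voided by fork resolution at line~\ref{line:resolve}. The model does not support this, and it contradicts how the paper proceeds. A successful $\zeta$-uple spend is assumed to multiply liquidity by $\zeta$. The proofs of Theorems~\ref{Theorem:the4} and~\ref{Theorem:the5} take $BoC=\mathbb{E}[\zeta-1]\,\ell\approx p^2\ell>0$ under exactly the betraying-coalition hypothesis of Lemmas~\ref{lem:lemma2}--\ref{lem:lemma4}. If your claim held, betraying would win for every $s>0$, and those stake thresholds would be vacuous. Agreement and Detection (Theorem~\ref{Theorem:the1}) only make the on-chain outcome unique and the attackers known. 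They say nothing about value already extracted on a discarded branch.

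So delete that paragraph rather than treating it as the main obstacle; the lemma needs no comparison with colluding. Your third paragraph is harmless and consistent with the model: there is no targeted retaliation, and temporary exclusion from consensus costs nothing since utility only counts assets.

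One caveat is shared by you and the paper. Line~\ref{line:BoR} splits the pooled slashed stakes equally among \emph{valid} betrayers. Net zero therefore holds for a betrayer that gets validated: its share is at least $s$, since the pool contains every valid betrayer's stake plus at least one Byzantine stake. A betrayer whose $\ms{pof}$ never gathers $t_0+1$ second signatures is slashed without compensation. The paper only covers this with the aggregate remark at the end of the proof of Lemma~\ref{lem:lemma2}.
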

\begin{proof}
Rational players who join the coalition initially send conflicting messages. Then, they reveal their proofs of fraud and their slashed stake, as shown at line~\ref{line:CoC} of Algorithm~\ref{alg:P}, will be returned to them as their liquidity, as shown at line~\ref{line:BoR} of Algorithm~\ref{alg:P}. As a result, the cost of betraying is nil. \qed 
\end{proof}

\begin{lemma}\label{lem:lemma4}
If we are in the presence of a betraying coalition, then the \textit{Cost-of-Cheating} $\ms{CoC}$, which is the amount of stake slashed that decreases a player's utility function when this player cheats, gets caught and does not unlock its stake, is $CoC = s$.
\end{lemma}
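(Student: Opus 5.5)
The plan is to argue directly from the pseudocode of Algorithm~\ref{alg:P}, as in the proofs of Lemmas~\ref{lem:lemma2} and~\ref{lem:lemma3}. The goal is to show that a cheating player loses exactly its stake $s$, no more and no less.

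\textbf{Detection.} I first establish that a cheating player is eventually caught. We are in the presence of a betraying coalition, so Lemma~\ref{lem:1betrayer} guarantees at least one valid betrayer. Lemma~\ref{lem:identify_attackers_betrayers} (equivalently, the Detection property of Theorem~\ref{Theorem:the1}) then gives that every honest player eventually outputs all attackers. The mechanism is that any valid $\ms{pof}$ naming the cheater is rebroadcast at line~\ref{line:second_signature_bcast} or line~\ref{line:no_second_signature_bcast}. The cheater equivocated while $\zeta$-uple spending, so transferable authentication ensures such a $\ms{pof}$ with its conflicting signed messages exists.

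\textbf{Size of the loss.} Next I trace what happens to the cheater's accounts when the $\ms{pof}$ is processed at line~\ref{line:PoF}:
\begin{itemize}
\item its stake $a.s$ is added to $\ms{gwallet}[before][r]$ at line~\ref{line:StoGlobalWallet};
\item its stake is set to $0$ at line~\ref{line:CoC}.
\end{itemize}
The slashed amount therefore equals its full stake $s$. Stake and liquidity are kept at the constant ratio $r=s/\ell$, and the cheater's liquidity $\ell$ is never touched by any slashing line. So the decrease attributable to being caught is exactly $s$.

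\textbf{The stake is not returned.} The key contrast with Lemma~\ref{lem:lemma3} is that the cheater never recovers this amount. Lines~\ref{line:BoR2} and~\ref{line:BoR} credit only players in $\ms{pofs}[r].\mathit{betrayers}$, and entering that set requires passing the validity check of lines~\ref{line:con5}--\ref{line:con4}. A cheater who did not betray cannot produce a $\ms{pof-ledger}$ with $t_0+1$ second signatures from non-attackers for its own $\ms{pof}$:
\begin{itemize}
\item honest players only re-sign $\ms{pof}$s from $\ms{pof-batch}$;
\item fewer than $t_0+1$ Byzantine co-signers exist, as argued in Lemma~\ref{lem:identify_attackers_betrayers}.
\end{itemize}
Hence the cheater is never rewarded, and its stake is not unlocked.

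\textbf{Main obstacle.} The hardest step is ruling out a late or forged betrayal claim by a caught cheater, i.e., that it could retroactively enter the betrayer list and recover $s$. I will handle this by invoking the second-signature argument from the proof of Lemma~\ref{lem:identify_attackers_betrayers}: any valid ledger must contain at least one honest signature, and an honest player never signs a $\ms{pof}$ published outside the eligible commit window. This gives $CoC=s$.
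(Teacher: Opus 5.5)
Your proposal is correct and follows the paper's route. The paper's proof invokes Lemma~\ref{lem:identify_attackers_betrayers} to get that every attacker is eventually identified, then points to the slashing at line~\ref{line:CoC} of Algorithm~\ref{alg:P}; these are exactly your detection and size-of-loss steps, and your extra section showing the stake is never returned is harmless but unnecessary, because the definition of $\ms{CoC}$ already assumes the cheater does not unlock its stake.
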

\begin{proof}
Since the \protocol protocol has the accountability property, all attackers will be identified according to Lemma~\ref{lem:identify_attackers_betrayers} and slashed eventually, as shown at line~\ref{line:CoC} of Algorithm~\ref{alg:P}. \qed 
\end{proof}

\begin{theorem}[Theorem~\ref{Theorem:the4}] 
If $\frac{s}{\ell} > \frac{k\cdot p^2}{f+k}$, then the \protocol protocol is an $\epsilon$-$(k,t)$-robust equilibrium against $\zeta$-uple spending attack.
\end{theorem}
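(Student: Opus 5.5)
We verify inequality~\eqref{eq:eq1} directly. For a rational player $i$ we compare the expected utility of colluding, $u_i(\theta,\tau_{K\cup T},\rho_{(K\cup T)^-})$, with the expected utility of betraying, $u_i(\theta,\sigma_K,\tau_T,\rho_{(K\cup T)^-})$. We assume, as in the setup of this section, that an attack attempt is under way ($h_0=\{Attack\}$), so a betraying coalition exists. Without bribery, a colluder's only source of gain is the coins it spends more than once. A successful $\zeta$-uple spending attack multiplies liquidity by $\zeta$, so the colluder gains at most $(\zeta-1)\ell$.

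\textbf{Step 1: bound the utility of colluding.} Split the expected gain according to the number of branches. The event $\zeta\ge 3$ has probability at most $\varepsilon=(e/\lambda)^3$ by Lemma~\ref{lem:lemma5}. By Proposition~\ref{prop}, $mp<3/e$, so $\lambda=3/(mp)>e$ and $\varepsilon<1$. The corresponding contribution to the gain is bounded and will be absorbed into the slack $\epsilon$ of Definition~\ref{def:def1}. For the event $\zeta=2$, two isolation attacks must both succeed, one per radius as in Figure~\ref{fig:fig5}. By independence this happens with probability $p^2$, giving an expected gain of at most $p^2\ell$. A colluder is also identified whenever a betraying coalition exists (Lemma~\ref{lem:identify_attackers_betrayers} together with Theorem~\ref{Theorem:the1}), so by Lemma~\ref{lem:lemma4} it loses $\ms{CoC}=s$. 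Its expected utility is therefore at most $p^2\ell - s + \epsilon$.

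\textbf{Step 2: bound the utility of betraying.} By Lemma~\ref{lem:lemma3} the betrayer's slashed stake is returned, so $\ms{CoR}=0$. By Lemma~\ref{lem:lemma2} it receives at least $\ms{BoR}=fs/k$. Its expected utility is therefore at least $fs/k$, plus the returned stake $s$ when the accounting is done relative to the colluder's baseline.

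\textbf{Step 3: compare.} Inequality~\eqref{eq:eq1} holds if $p^2\ell \le s + fs/k = s(f+k)/k$, which is equivalent to $s/\ell \ge kp^2/(f+k)$. The strict hypothesis of Theorem~\ref{Theorem:the4} gives this with margin.

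To pass from the one-shot comparison to the repeated game, observe that a betrayer is excluded from consensus for the following rounds, and that every attack attempt produces PoFs that slash the attackers' stake. Each block index is then a fresh stage in which the same comparison applies. Betrayal dominates at every stage, hence also in every continuation, so the inequality holds for the discounted sum.

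\textbf{Main obstacle: the accounting in Step 2.} The difficulty is ensuring that the betrayer's returned stake $s$ appears on the betrayal side, so that the threshold comes out as $kp^2/(f+k)$ rather than $kp^2/f$. I would handle this by normalising all utilities to the initial holdings $s+\ell$. Relative to that baseline, colluding yields $s+\ell-s+p^2\ell$ and betraying yields $s+\ell+fs/k$ in expectation. I would then check that the $\zeta\ge 3$ tail is genuinely of order $\varepsilon$. If it is not, I would fold it into $\epsilon$ using the bound on $\zeta$ from Lemma~\ref{lem:lemma1}.
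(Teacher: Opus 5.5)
Your proposal is correct and is essentially the paper's argument. The paper likewise reduces \eqref{eq:eq1} to $\mathit{BoR}-\mathit{CoR} > \mathit{BoC}-\mathit{CoC}$, bounds $\mathit{BoC}$ by $p^2\ell$ plus a $\zeta\ge 3$ tail of order $(e/\lambda)^3$ absorbed into $\epsilon$, and plugs in $\mathit{BoR}=fs/k$, $\mathit{CoR}=0$, $\mathit{CoC}=s$ from Lemmas~\ref{lem:lemma2}--\ref{lem:lemma4} to get $s(f+k)/k > p^2\ell$. The ``obstacle'' you flag is already handled in Step~1, where the single $s$ enters as $\mathit{CoC}$ on the colluding side (equivalently, as the returned stake in your normalised accounting), so do not also add it on the betrayal side in Step~2, which would count it twice.
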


\begin{proof}
Let \textit{Benefit-of-Cheating} $\ms{BoC}$ be the benefit of cheating for a rational attacker after a successful attack. To prove that our protocol is a $(k,t)$-robust equilibrium, we modify equation~\ref{eq:eq1} accordingly:
\begin{equation*}
BoR-CoR>BoC-CoC. 
\end{equation*}
Let $Y_1$ be the event of a double-spending attack which occurs with the probability $p^2$ and $Y_2$ be the event of a $\zeta$-uple spending attack for $\zeta \geq 3$ which occurs with upper-bounded probability $\varepsilon$. Consider $Y_1$ and $Y_2$ are mutually exclusive events such that $Y_1 \cap Y_2 = \varnothing$. In a $\zeta$-uple spending attack, attackers successfully execute the same transaction $\zeta$ times, multiplying their initial liquidity by a factor of $\zeta$ and thus profiting by one liquidity amount. By considering the complexity of isolating a network, the benefit of cheating is:
\begin{equation*}
BoC=\mathbb{E}\left[(\zeta-1)\cdot \ell \right]=\mathbb{E}\left[(\zeta-1)\right] \cdot \ell.
\end{equation*}
Where
\begin{align*}
\mathbb{E}[\zeta-1] &= \mathbb{E}[\zeta-1 \mid Y_1] \cdot \mathbb{P}(Y_1) + \mathbb{E}[\zeta-1 \mid Y_2] \cdot \mathbb{P}(Y_2)\leq p^2+(\mu+c)\cdot \varepsilon\\
&<p^2+\left(\frac{3}{\lambda}+c \right)\cdot \left(e/ \lambda\right)^3=p^2+\mathcal{O}(1/{\lambda}^3)  \quad \text{for }c\text{, a positive constant}.
\end{align*}
As a result:
\begin{equation*}
BoC < \ell \cdot p^2 + \mathcal{O}(1/{\lambda}^3)=\ell \cdot p^2+\epsilon  \quad \text{for }\frac{s}{\ell}\gg \epsilon.
\end{equation*}
By~\Cref{lem:lemma2,lem:lemma3,lem:lemma4}, to satisfy the condition $BoR-CoR +\epsilon> BoC-CoC$ and have \protocol reach a $\epsilon$-$(k,t)$-robust equilibrium despite $\zeta$-uple spending attacks, 
the following stake over liquidity ratio must be satisfied:

\begin{gather*}
\frac{s}{\ell} > \frac{k\cdot p^2}{f+k}.
\end{gather*} \qed 
\end{proof}

\begin{theorem}[Theorem~\ref{Theorem:the5}]
If $\frac{s}{\ell} > p^2$, then the \protocol protocol is an $\epsilon$-$(k,t)$-robust equilibrium that protects against $\zeta$-uple spending attack and bribery. 
\end{theorem}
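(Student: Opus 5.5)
I would mirror the proof of Theorem~\ref{Theorem:the4}. The same incentive-compatibility inequality $BoR - CoR + \epsilon > BoC - CoC$ is the reformulation of Equation~\eqref{eq:eq1}. The only difference is that the benefit of colluding must now also include the bribe that Byzantine players can offer from their stolen gains.

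\textbf{Step 1: reuse the lemmas.} I would take $CoR = 0$ from Lemma~\ref{lem:lemma3} and $CoC = s$ from Lemma~\ref{lem:lemma4} unchanged. Bribery does not affect slashing, and by Lemma~\ref{lem:identify_attackers_betrayers} every attacker is eventually identified. I would keep $BoR \geq \frac{f\cdot s}{k}$ from Lemma~\ref{lem:lemma2}. In the worst case for the protocol I would even discard this term and lower-bound $BoR$ by $0$ when deriving a condition free of $f$ and $k$. This matches the statement, whose bound $p^2$ depends on neither.

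\textbf{Step 2: bound the colluding payoff under bribery.} A rational colluder now obtains its own expected $\zeta$-uple gain $\mathbb{E}[\zeta-1]\cdot\ell$ plus a bribe. The bribe is paid out of the Byzantine players' expected gain, at most $f\cdot \mathbb{E}[\zeta-1]\cdot\ell$, and in the worst case it is redistributed equally among the $k$ rational players via the smart contract. This gives
\[
BoC_{\text{bribe}} \leq \Bigl(1+\tfrac{f}{k}\Bigr)\mathbb{E}[\zeta-1]\,\ell .
\]
The event decomposition from the proof of Theorem~\ref{Theorem:the4} splits into $Y_1$ (double spending, probability $p^2$) and $Y_2$ ($\zeta\ge3$, probability at most $(e/\lambda)^3$ by Lemma~\ref{lem:lemma5}, using Proposition~\ref{prop}). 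It yields $\mathbb{E}[\zeta-1] < p^2 + \mathcal{O}(1/\lambda^3)$. The $\mathcal{O}(1/\lambda^3)$ part is absorbed into $\epsilon$, so
\[
BoC_{\text{bribe}} < \tfrac{f+k}{k}\,\ell p^2 + \epsilon .
\]

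\textbf{Step 3: combine.} The condition becomes
\[
\tfrac{f s}{k} + s + \epsilon > \tfrac{f+k}{k}\,\ell p^2 + \epsilon .
\]
The left-hand side equals $\tfrac{f+k}{k}\,s$, so the factor $\tfrac{f+k}{k}$ cancels and the condition reduces exactly to $\frac{s}{\ell} > p^2$. Under this hypothesis, betrayal yields at least the collusion payoff up to $\epsilon$, which is Definition~\ref{def:def1}.

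\textbf{Main obstacle.} The main obstacle is Step 2: justifying the worst-case bribe model. I must argue that Byzantine players cannot promise more than their realised gain, since their stake is slashed, so $CoC$ applies to them too and they have nothing else to offer. I must also argue that a bribe is only paid on success, so it carries the same probabilistic factor $\mathbb{E}[\zeta-1]$. If instead a coalition could concentrate all bribes on fewer rational players, I would fall back on noting that $BoR$ is then also split among fewer betrayers. The ratio $\frac{f+k}{k}$ would still appear on both sides and cancel.
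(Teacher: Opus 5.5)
Your argument is the paper's proof: the same bribery model $BoC=(1+\frac{f}{k})\,\mathbb{E}[\zeta-1]\,\ell$, the same $Y_1/Y_2$ split with the $\mathcal{O}(1/\lambda^3)$ tail absorbed into $\epsilon$, and the same cancellation of $\frac{f+k}{k}$ after plugging in $BoR=\frac{fs}{k}$, $CoR=0$, $CoC=s$ from Lemmas~\ref{lem:lemma2}--\ref{lem:lemma4}. Drop the aside in Step~1 about lower-bounding $BoR$ by $0$, though: that would only give $\frac{s}{\ell}>(1+\frac{f}{k})p^2$, and it is exactly the $\frac{fs}{k}$ term, which your Step~3 correctly uses, that makes the condition independent of $f$ and $k$.
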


\begin{proof}
Let \textit{Benefit-of-Cheating} $\ms{BoC}$ be the benefit of cheating for a rational attacker after a successful attack. To prove that our protocol is a $(k,t)$-robust equilibrium, we modify equation~\ref{eq:eq1} accordingly:
\begin{equation*}
BoR-CoR>BoC-CoC. 
\end{equation*}
Let $Y_1$ be the event of a double-spending attack which occurs with the probability $p^2$ and $Y_2$ be the event of a $\zeta$-uple spending attack for $\zeta \geq 3$ which occurs with a probability upper-bounded by $\varepsilon$. Consider $Y_1$ and $Y_2$ are mutually exclusive events such that $Y_1 \cap Y_2 = \varnothing$. By considering the complexity of isolating a network, the benefit of cheating is:
\begin{equation*}
BoC=\mathbb{E}\left[(\zeta-1)\cdot \ell\cdot \left(1 + \frac{f}{k} \right) \right].
\end{equation*}
Where
\begin{align*}
\mathbb{E}[\zeta-1] &= \mathbb{E}[\zeta-1 \mid Y_1] \cdot \mathbb{P}(Y_1) + \mathbb{E}[\zeta-1 \mid Y_2] \cdot \mathbb{P}(Y_2)\\
&\leq p^2+(\mu+c)\cdot \varepsilon<p^2+\left(\frac{3}{\lambda}+c \right)\cdot \left(e/ \lambda\right)^3=p^2+\mathcal{O}(1/{\lambda}^3)  \quad \text{for }c=\text{constant}.
\end{align*}
As a result:
\begin{equation*}
BoC < \ell\cdot \left(1 + \frac{f}{k} \right) \cdot p^2 + \mathcal{O}(1/{\lambda}^3)=\ell\cdot \left(1 + \frac{f}{k} \right) \cdot p^2+\epsilon  \quad \text{for }\frac{s}{\ell}\gg \epsilon.
\end{equation*}
To satisfy the condition where $BoR-CoR > BoC-CoC$ and have a $\epsilon$-$(k,t)$-robust equilibrium, taking into account~\Cref{lem:lemma2,lem:lemma3,lem:lemma4}, we conclude that the \protocol protocol remains secure against $\zeta$-uple spending attacks under the specified condition:
\begin{gather*}
\frac{s}{\ell} > p^2.
\end{gather*} \qed 
\end{proof}

\end{document}